\documentclass[12pt,a4paper]{article}

\usepackage{amsmath,amssymb,amsthm}
\usepackage{braket}
\usepackage{graphicx}
\usepackage{geometry}
\usepackage{float}

\newtheorem{theorem}{Theorem}[section]
\newtheorem{corollary}[theorem]{Corollary}
\newtheorem{proposition}[theorem]{Proposition}

\newtheorem{remark}[theorem]{Remark}

\title{Spectral-Gap Bounds and Timescales for Purity Loss in Hamiltonian--Pointer Interactions}

\author{
Orhan Amirov$^{1}$ and Necati \c{C}elik$^{2}$\\[1ex]
\small $^{1}$Department of Mathematics, Faculty of Science,\\
\small Atat\"urk University, Erzurum, T\"urkiye\\
\small $^{2}$Department of Physics Engineering, Faculty of Engineering and Natural Sciences,\\
\small G\"um\"u\c{s}hane University, G\"um\"u\c{s}hane, T\"urkiye
}

\date{}

\begin{document}

\maketitle

\begin{abstract}
We investigate the purity dynamics of a quantum system coupled to a
continuous-variable pointer through a von Neumann-type interaction
Hamiltonian of the form
$\hat H_{\mathrm{int}}=g\,\hat H\otimes\hat p$.
For an initially Gaussian pointer state, the interaction generates
energy-dependent conditional translations whose mutual overlaps are
determined explicitly by the populated spectral separations of the
system Hamiltonian. After tracing out the pointer degrees of freedom,
we obtain the reduced density operator and derive an exact analytical
expression for the time-dependent purity.

Using this expression, we establish two-sided purity bounds governed by
the minimum and maximum nonzero energy gaps on the populated spectral
support. These bounds provide a state-dependent spectral
characterization of the loss of purity and become exact for two-level
systems. We further show that the short-time decrease of purity is
controlled by the Hamiltonian variance of the initial state, with
$\mathcal P''(0)=-(g^2/\sigma^2)
\operatorname{Var}_{\psi_S}(\hat H)$.
In addition, an explicit sufficient timescale is derived for the purity
to approach its asymptotic value within a prescribed tolerance,
revealing the scaling
$t_{\varepsilon}\propto\sigma/(|g|\Delta_{\min})$.
Finally, the general results are illustrated for an equally weighted
$N$-level system with an equally spaced spectrum, for which the
asymptotic purity is $1/N$. The analysis clarifies the distinct roles
of spectral separation, energy variance, coupling strength, and pointer
width in Hamiltonian-conditioned purity loss.
\end{abstract}

\noindent\textbf{Keywords:}
quantum purity; von Neumann measurement; continuous-variable pointer;
spectral gap; decoherence; Hamiltonian variance; Gaussian pointer;
quantum measurement dynamics.

\section{Introduction}
\label{sec:introduction}

The interaction between a quantum system and a measurement apparatus
provides one of the fundamental mechanisms through which information
about a quantum observable is transferred to an experimentally
accessible degree of freedom. In the von Neumann measurement
framework, this transfer is modeled by coupling a system observable to
the momentum of a continuous-variable pointer, so that different
eigenvalues of the observable generate different translations of the
pointer wave packet \cite{vonNeumann2018}. The resulting conditional
pointer states provide a direct dynamical connection between the
spectral properties of the measured observable and the distinguishability
of the corresponding measurement outcomes.

The loss of coherence associated with such system--apparatus
correlations is closely related to the broader theory of decoherence
and environment-induced superselection. Early studies established how
interactions with external degrees of freedom suppress coherences and
select preferred states or pointer bases
\cite{ccelik2025quantum,Zurek1981,Zurek1982,JoosZeh1985}. These ideas were subsequently
developed in the context of coherent states, quantum measurement, and
the emergence of classical behavior
\cite{ZurekHabibPaz1993,Venugopalan2000,Zurek2003}.
Comprehensive treatments of decoherence and its relation to the
measurement problem can be found in
Refs.~\cite{Schlosshauer2005,Schlosshauer2019}.

Gaussian pointer states are particularly useful in this setting
because conditional translations can be treated analytically. The
overlap between two translated Gaussian packets decreases with their
relative displacement, providing a natural measure of how effectively
the pointer distinguishes different spectral components of the
system. Once the pointer degrees of freedom are traced out, these
overlaps appear directly as suppression factors multiplying the
off-diagonal elements of the reduced density operator. Conditional
displacements also remain relevant in modern measurement and readout
protocols, including implementations in superconducting quantum
systems \cite{Touzard2019}. More recent studies have continued to
explore von Neumann measurement schemes with structured pointer states
\cite{YuanbekRenTurek2025}, while the characterization and emergence
of pointer states remain active topics in more general open-system
and measurement settings
\cite{KofmanKurizki2022,SinghSawickiKorbicz2024}.

A natural quantity for characterizing the reduced dynamics generated
by such correlations is the purity
$\mathcal{P}(t)=\operatorname{Tr}[\rho_S^2(t)]$.
For a globally pure bipartite state, a decrease in the purity of either
reduced subsystem directly reflects the generation of bipartite
entanglement between the two subsystems. More generally, purity and related
entropic quantities provide useful measures of reduced-state mixing
and entanglement generation. The short-time production of
entanglement has been studied in terms of fluctuations of the
interaction operators, leading to general entanglement timescales
\cite{ZnidariProsen2005,Yang2018,Cresswell2018}. Such results indicate
that the initial departure from purity contains information about the
variance structure of the interaction Hamiltonian.

Recent work has also emphasized the role of pairwise state overlaps in the
purity of density operators constructed from non-orthogonal quantum states.
In particular, overlap-based representations and bounds provide a direct
connection between the geometry of the underlying state set and the resulting
mixedness, with further implications for quantum Fisher information
\cite{Amirov2026Purity}. In the present setting, this overlap-based viewpoint
acquires a dynamical form: the relevant non-orthogonality arises between
Hamiltonian-conditioned pointer states, whose overlaps are determined
explicitly by the populated spectral separations of the system Hamiltonian.
This provides a natural route from pairwise state distinguishability to
spectral-gap-controlled purity dynamics.

For Hamiltonian-conditioned pointer interactions, however, it is also
important to distinguish the short-time behavior from the spectral
mechanisms governing the subsequent approach toward the asymptotic
reduced state. The exact reduced dynamics contains contributions from
all pairwise separations between the populated energy eigenvalues.
Consequently, the initial energy distribution, the smallest and
largest populated spectral gaps, the pointer width, and the
interaction strength play different roles in the purity dynamics.
While the standard conditional-translation picture explains the
suppression of individual coherences, it is useful to formulate
global estimates that connect the purity directly to extremal energy
separations on the spectral support actually populated by the initial
state.

In this work, we extend this overlap-based perspective to the present
dynamical setting. We consider a quantum system coupled to a
continuous-variable pointer through the interaction Hamiltonian
$\hat H_{\mathrm{int}}=g\,\hat H\otimes\hat p$,
where $\hat H$ is the system Hamiltonian and $\hat p$ is the pointer
momentum. For an initially Gaussian pointer state, we first derive the
conditional pointer translations and their exact overlaps. Tracing out the
pointer degrees of freedom then yields an explicit reduced density operator
and a closed analytical expression for the time-dependent purity in terms
of the populated pairwise energy separations.

Using this representation, we establish two-sided bounds on the purity
controlled by the minimum and maximum energy gaps on the populated
spectral support. The bounds are therefore state dependent: spectral
levels that are not populated initially do not enter the estimates.
We further show that these bounds are sharp for two-level systems,
where the minimum and maximum populated gaps coincide. The upper bound
is then used to obtain an explicit sufficient interaction timescale
for the purity to approach its asymptotic value within a prescribed
tolerance. In particular, the resulting timescale exhibits the scaling
$t_{\varepsilon}\propto\sigma/(|g|\Delta_{\min})$,
which identifies the minimum populated spectral gap as the slow
spectral scale governing this estimate.

The short-time regime provides a complementary characterization.
Expanding the exact purity around the initial time, we obtain
$\mathcal{P}''(0)=-(g^2/\sigma^2)
\operatorname{Var}_{\psi_S}(\hat H)$.
Thus, the initial curvature is governed by the energy variance of the
initial system state, whereas the sufficient long-time scale obtained
from the spectral bound is governed by the minimum populated energy
gap. This separates two different aspects of the spectral structure:
the weighted energy distribution controls the initial purity loss,
while the smallest relevant spectral separation determines the slowest
gap-dependent contribution entering the global upper bound.

Finally, we apply the general results to two representative classes
of states. The two-level case demonstrates the exact saturation of the
spectral-gap bounds, while an equally weighted $N$-level system with
an equally spaced spectrum provides an explicit many-level example.
For the latter, the asymptotic purity is $1/N$. The analytical and
numerical examples illustrate the dependence on the number of populated
levels and the spectral spacing, while the pointer-width dependence
follows directly from the equivalent $\Delta/\sigma$ scaling.

The main contribution of the present work is not the conditional
Gaussian translation itself, which is a standard feature of
von Neumann-type measurement interactions, but the spectral
characterization of the resulting purity dynamics. Specifically,
we derive two-sided purity bounds determined by the minimum and
maximum energy gaps on the populated spectral support, establish
their exact saturation for two-level systems, and obtain an explicit
sufficient timescale for approaching the asymptotic purity. We also
show that the short-time purity curvature is governed by the
Hamiltonian variance, thereby separating the spectral quantity that
controls the initial purity loss from the minimum populated gap that
controls the slowest contribution to the long-time upper bound.

The remainder of the paper is organized as follows. We first formulate
the Hamiltonian--pointer interaction and derive the associated
conditional translations. We then specialize to Gaussian pointer
states and analyze their spectral distinguishability. The reduced
system dynamics, exact purity, and spectral-gap bounds are subsequently
derived, followed by the short-time variance relation and the
gap-controlled purity timescale. We next establish the sharpness of
the bounds for two-level systems and examine the equally spaced
$N$-level model. The final sections discuss the physical implications,
limitations, and possible extensions of the results.

\section{Hamiltonian--Pointer Interaction}

Let $\mathcal{H}_S$ denote the Hilbert space of the quantum system
and let $\mathcal{H}_P=L^2(\mathbb{R})$ denote the Hilbert space
of a continuous-variable pointer.

Let $\hat{H}$ be the Hamiltonian of the quantum system. We assume
that $\hat{H}$ has the spectral decomposition
\begin{equation}
    \hat{H}
    =
    \sum_j E_j |E_j\rangle\langle E_j|,
    \label{eq:H_spectral}
\end{equation}
where $\hat{H}|E_j\rangle=E_j|E_j\rangle$.

The pointer is described by the canonical position and momentum
operators $\hat{x}$ and $\hat{p}$ satisfying
$[\hat{x},\hat{p}]=i\hbar$.

We consider a von Neumann-type interaction Hamiltonian of the form
\begin{equation}
    \hat H_{\mathrm{int}}
    =
    g\,\hat H\otimes\hat p,
    \label{eq:interaction_hamiltonian}
\end{equation}
where $g\in\mathbb{R}$ is the coupling constant. Since $\hat H$ has
the dimensions of energy and $\hat p$ has the dimensions of momentum,
$g$ has the dimensions of inverse momentum, equivalently
length divided by energy times time. Consequently,
$\hat H_{\mathrm{int}}$ has the dimensions of energy, while the
conditional displacement $gtE_j$ has the dimensions of length, as
required for a spatial translation of the pointer.

Whenever a nontrivial system--pointer interaction is considered,
we further assume that $g\neq0$.

For an interaction time $t$, the corresponding unitary evolution
operator is
\begin{equation}
    U(t)
    =
    \exp\left(
        -\frac{i}{\hbar}
        gt\,\hat{H}\otimes\hat{p}
    \right).
    \label{eq:unitary}
\end{equation}

Let the initial state of the quantum system be
\begin{equation}
    |\psi_S\rangle
    =
    \sum_j c_j |E_j\rangle,
    \qquad
    \sum_j |c_j|^2=1,
    \label{eq:system_state}
\end{equation}
and let the pointer initially be in a normalized state
$|\chi\rangle\in\mathcal{H}_P$.

Throughout the analysis, we assume that the initial system state has
finite spectral support. Whenever spectral-gap bounds and asymptotic
purity are considered, the populated energy eigenvalues are further
assumed to be pairwise distinct. This restriction is imposed only to
keep the extremal-gap definitions and the asymptotic purity statements
transparent. Degenerate spectra can be treated by grouping together
contributions associated with the same energy eigenvalue.

The initial state of the composite system is therefore
\begin{equation}
    |\Psi(0)\rangle
    =
    |\psi_S\rangle\otimes|\chi\rangle
    =
    \sum_j c_j|E_j\rangle\otimes|\chi\rangle.
    \label{eq:initial_state}
\end{equation}

\section{Conditional Translation of the Pointer}

We define the translation operator acting on the pointer Hilbert
space by
\begin{equation}
    T(d)
    =
    \exp\left(
        -\frac{i}{\hbar}d\hat{p}
    \right).
    \label{eq:translation_operator}
\end{equation}

Its action on a position eigenstate is
$T(d)|x\rangle=|x+d\rangle$. If
$\chi(x)=\langle x|\chi\rangle$ is the initial pointer wavefunction,
then $\langle x|T(d)|\chi\rangle=\chi(x-d)$.

\begin{theorem}[Hamiltonian-dependent conditional translation]
Let the initial composite state be given by
Eq.~\eqref{eq:initial_state}. Under the interaction Hamiltonian
in Eq.~\eqref{eq:interaction_hamiltonian}, the evolved state is
\begin{equation}
    |\Psi(t)\rangle
    =
    \sum_j
    c_j |E_j\rangle
    \otimes
    T(gtE_j)|\chi\rangle.
    \label{eq:main_result}
\end{equation}

Equivalently, defining
$|\chi_j(t)\rangle:=T(gtE_j)|\chi\rangle$, we obtain
\begin{equation}
    |\Psi(t)\rangle
    =
    \sum_j
    c_j |E_j\rangle
    \otimes
    |\chi_j(t)\rangle.
    \label{eq:pointer_decomposition}
\end{equation}

In the position representation, the conditional pointer states satisfy
\begin{equation}
    \chi_j(x,t)
    :=
    \langle x|\chi_j(t)\rangle
    =
    \chi(x-gtE_j).
    \label{eq:pointer_shift}
\end{equation}

Thus, each energy eigenvalue $E_j$ generates the pointer displacement
$d_j=gtE_j$.
\end{theorem}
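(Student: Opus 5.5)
The plan is to use the spectral decomposition of $\hat H$ to turn $U(t)$ into a block-diagonal operator in the energy eigenbasis. Each block then acts on the pointer as a translation.

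First I will show that
\begin{equation*}
U(t)=\sum_j |E_j\rangle\langle E_j|\otimes T(gtE_j).
\end{equation*}
By Eq.~\eqref{eq:H_spectral}, $\hat H\otimes\hat p=\sum_j E_j\,|E_j\rangle\langle E_j|\otimes\hat p$. The projectors $P_j=|E_j\rangle\langle E_j|$ are mutually orthogonal, and they commute with $\hat H$ and with each other. Hence $(\hat H\otimes\hat p)^n=\sum_j E_j^n P_j\otimes\hat p^{\,n}$ for $n\ge1$. Inserting this into the exponential series, and using $\sum_j P_j=\mathbb 1$ on the support of $|\psi_S\rangle$ for the $n=0$ term, gives
\begin{equation*}
U(t)=\sum_j P_j\otimes \exp\!\left(-\tfrac{i}{\hbar}gtE_j\hat p\right)=\sum_j P_j\otimes T(gtE_j),
\end{equation*}
by the definition in Eq.~\eqref{eq:translation_operator}.

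Applying this to Eq.~\eqref{eq:initial_state} then yields Eq.~\eqref{eq:main_result} directly, since $P_j|E_k\rangle=\delta_{jk}|E_k\rangle$. Equation~\eqref{eq:pointer_decomposition} is just a relabeling. Equation~\eqref{eq:pointer_shift} follows from $\langle x|T(d)|\chi\rangle=\chi(x-d)$ with $d=gtE_j$. One can check this by writing $T(d)^\dagger|x\rangle=T(-d)|x\rangle=|x-d\rangle$, using the stated action $T(d)|x\rangle=|x+d\rangle$.

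The main obstacle is the rigor of the operator-exponential step. The operator $\hat p$ is unbounded, so the power series is only formal. I would avoid it by working with spectral calculus instead. The system state has finite spectral support, so we may restrict to the finite-dimensional invariant subspace spanned by the populated $|E_j\rangle$. On each invariant subspace $|E_j\rangle\otimes\mathcal H_P$, the self-adjoint operator $g\hat H\otimes\hat p$ acts as $gE_j\hat p$. Functional calculus for a direct sum of self-adjoint operators then gives $U(t)$ blockwise as $e^{-igtE_j\hat p/\hbar}$. Equivalently, both sides can be checked to satisfy the same Schrödinger equation with the same initial data, and uniqueness of the unitary group (Stone's theorem) identifies them.
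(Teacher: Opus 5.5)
Your proposal is correct and follows the paper's own proof: it uses the spectral decomposition to write $U(t)=\sum_j |E_j\rangle\langle E_j|\otimes T(gtE_j)$, applies this to the product initial state via orthonormality, and then reads off $\chi(x-gtE_j)$ from $\langle x|T(d)|\chi\rangle=\chi(x-d)$. Your functional-calculus / Stone's-theorem justification for the unbounded $\hat p$ adds rigor that the paper omits, but it does not change the route.
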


\begin{proof}
Using the spectral decomposition of $\hat H$ in
Eq.~\eqref{eq:H_spectral}, the unitary operator can be written as
\begin{equation}
    U(t)
    =
    \sum_j
    |E_j\rangle\langle E_j|
    \otimes
    \exp\left(
        -\frac{i}{\hbar}gtE_j\hat p
    \right)
    =
    \sum_j
    |E_j\rangle\langle E_j|
    \otimes
    T(gtE_j).
    \label{eq:controlled_translation}
\end{equation}

Applying this operator to the initial composite state gives
\begin{align*}
    U(t)|\Psi(0)\rangle
    &=
    \left[
    \sum_j
    |E_j\rangle\langle E_j|
    \otimes
    T(gtE_j)
    \right]
    \left[
    \sum_k
    c_k |E_k\rangle\otimes|\chi\rangle
    \right]
    \\
    &=
    \sum_{j,k}
    c_k
    |E_j\rangle
    \langle E_j|E_k\rangle
    \otimes
    T(gtE_j)|\chi\rangle.
\end{align*}

Since the energy eigenstates are orthonormal,
$\langle E_j|E_k\rangle=\delta_{jk}$, it follows that
$|\Psi(t)\rangle=U(t)|\Psi(0)\rangle
=\sum_j c_j|E_j\rangle\otimes T(gtE_j)|\chi\rangle$.

With $|\chi_j(t)\rangle=T(gtE_j)|\chi\rangle$, the evolved state
can equivalently be expressed in terms of the conditional pointer
states introduced above.

Finally, using
$\langle x|T(d)|\chi\rangle=\chi(x-d)$ with $d=gtE_j$, we obtain
$\chi_j(x,t)=\chi(x-gtE_j)$. Hence, the energy eigenvalue $E_j$ is
encoded as a spatial displacement $gtE_j$ of the pointer state.
\end{proof}

\begin{remark}
The relations $T(d)|x\rangle=|x+d\rangle$ and
$\langle x|T(d)|\chi\rangle=\chi(x-d)$ are consistent. The first
describes the translation of a position eigenstate, whereas the second
describes the translated wavefunction in the position representation.
\end{remark}

\begin{remark}
If the coupling constant is absorbed into the interaction time,
or if dimensionless units are employed, one may formally set $g=1$.
In this case,
$|\Psi(t)\rangle=\sum_j c_j|E_j\rangle\otimes
T(tE_j)|\chi\rangle$, which corresponds directly to the original
formulation.
\end{remark}

\section{Gaussian Pointer States and Spectral Distinguishability}

We now specialize the initial pointer state to a normalized Gaussian
wave packet. This choice allows the conditional pointer overlaps to be
obtained analytically and reveals a direct connection between the
spectral gaps of the system Hamiltonian and the distinguishability of
the corresponding pointer states.

Let the initial pointer wavefunction be
\begin{equation}
    \chi(x)
    =
    \frac{1}{(2\pi\sigma^2)^{1/4}}
    \exp\left(
        -\frac{x^2}{4\sigma^2}
    \right),
    \label{eq:gaussian_pointer}
\end{equation}
where $\sigma>0$ denotes the spatial width of the pointer.

Following the interaction derived in the previous section, the
conditional pointer state associated with the energy eigenvalue $E_j$
is $|\chi_j(t)\rangle=T(gtE_j)|\chi\rangle$. In the position
representation, this state has the wavefunction
\begin{equation}
    \chi_j(x,t)
    =
    \frac{1}{(2\pi\sigma^2)^{1/4}}
    \exp\left[
        -\frac{(x-gtE_j)^2}{4\sigma^2}
    \right].
    \label{eq:translated_gaussian}
\end{equation}

\begin{theorem}[Overlap of Hamiltonian-conditioned pointer states]
Let $|\chi_j(t)\rangle$ and $|\chi_k(t)\rangle$ denote the
Gaussian pointer states associated with the energy eigenvalues
$E_j$ and $E_k$, respectively. Then their overlap is
\begin{equation}
    \langle\chi_k(t)|\chi_j(t)\rangle
    =
    \exp\left[
        -\frac{g^2t^2(E_j-E_k)^2}{8\sigma^2}
    \right].
    \label{eq:pointer_overlap}
\end{equation}
\end{theorem}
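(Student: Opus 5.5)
The plan is to compute the overlap directly in the position representation, using the explicit translated wavefunctions in Eq.~\eqref{eq:translated_gaussian}. The conditional pointer states are $|\chi_j(t)\rangle=T(gtE_j)|\chi\rangle$, and the Gaussian in Eq.~\eqref{eq:gaussian_pointer} is real. So the overlap reduces to the real integral
\begin{equation*}
    \langle\chi_k(t)|\chi_j(t)\rangle
    =
    \int_{\mathbb R}\chi(x-d_k)\,\chi(x-d_j)\,dx
    =
    \frac{1}{\sqrt{2\pi\sigma^2}}
    \int_{\mathbb R}
    \exp\!\left[-\frac{(x-d_j)^2+(x-d_k)^2}{4\sigma^2}\right]dx,
\end{equation*}
with $d_j=gtE_j$ and $d_k=gtE_k$. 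Complex conjugation of $\chi_k$ plays no role.

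The key algebraic step is to complete the square in the exponent. Write $\bar d=(d_j+d_k)/2$ and $\delta=d_j-d_k$. Then
\begin{equation*}
(x-d_j)^2+(x-d_k)^2=2(x-\bar d)^2+\tfrac{1}{2}\delta^2 .
\end{equation*}
The exponent therefore splits into $-(x-\bar d)^2/(2\sigma^2)-\delta^2/(8\sigma^2)$. The $x$-independent factor $e^{-\delta^2/(8\sigma^2)}$ comes out of the integral. What remains is $\int e^{-(x-\bar d)^2/(2\sigma^2)}dx=\sqrt{2\pi\sigma^2}$, which cancels the normalization prefactor exactly. Substituting $\delta=gt(E_j-E_k)$ gives Eq.~\eqref{eq:pointer_overlap}.

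An alternative, representation-free route serves as a cross-check. Translations compose as $T(a)^\dagger T(b)=T(b-a)$, so the overlap equals $\langle\chi|T(\delta)|\chi\rangle$. This is the characteristic function of $\hat p$ in the Gaussian state. The momentum-space wavefunction is Gaussian with variance $\hbar^2/(4\sigma^2)$, so it evaluates to $\exp[-\delta^2/(8\sigma^2)]$, consistent with the direct computation.

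There is no real obstacle here. The only point needing care is bookkeeping the factor of $1/4$ versus $1/2$ in the Gaussian exponent: the pointer width convention in Eq.~\eqref{eq:gaussian_pointer} puts $4\sigma^2$ in the amplitude, not $2\sigma^2$. That choice is what produces the $8\sigma^2$ in the final denominator, and I would verify it by checking normalization at $\delta=0$, where the overlap must equal $1$. I would also note that the overlap is real and symmetric in $j,k$, which is consistent with the real, even Gaussian profile.
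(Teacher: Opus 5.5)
Your proposal is correct and follows essentially the same route as the paper: you write the overlap as a real position-space integral, complete the square via $(x-d_j)^2+(x-d_k)^2=2(x-\bar d)^2+\tfrac12(d_j-d_k)^2$, and let the normalized Gaussian integral cancel the prefactor. Your characteristic-function cross-check, using $T(d_k)^\dagger T(d_j)=T(d_j-d_k)$ and momentum variance $\hbar^2/(4\sigma^2)$, is a valid independent confirmation that the paper does not include.
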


\begin{proof}
By definition,
$\langle\chi_k(t)|\chi_j(t)\rangle
=\int_{-\infty}^{\infty}
\chi_k^*(x,t)\chi_j(x,t)\,dx$.
Since the Gaussian wavefunctions in
Eq.~\eqref{eq:translated_gaussian} are real, this gives
\begin{align*}
    \langle\chi_k(t)|\chi_j(t)\rangle
    &=
    \frac{1}{\sqrt{2\pi\sigma^2}}
    \int_{-\infty}^{\infty}
    \exp\left[
        -\frac{(x-gtE_k)^2}{4\sigma^2}
        -\frac{(x-gtE_j)^2}{4\sigma^2}
    \right]dx.
\end{align*}

Introducing the displacements $d_j=gtE_j$ and $d_k=gtE_k$, and using
$(x-d_j)^2+(x-d_k)^2
=2\left(x-\frac{d_j+d_k}{2}\right)^2
+\frac{(d_j-d_k)^2}{2}$, we obtain
\begin{align*}
    \langle\chi_k(t)|\chi_j(t)\rangle
    &=
    \exp\left[
        -\frac{(d_j-d_k)^2}{8\sigma^2}
    \right]
    \frac{1}{\sqrt{2\pi\sigma^2}}
    \\
    &\quad\times
    \int_{-\infty}^{\infty}
    \exp\left[
        -\frac{1}{2\sigma^2}
        \left(
            x-\frac{d_j+d_k}{2}
        \right)^2
    \right]dx.
\end{align*}

The normalized Gaussian integral equals unity. Using
$d_j-d_k=gt(E_j-E_k)$ then directly yields
Eq.~\eqref{eq:pointer_overlap}.
\end{proof}

\begin{corollary}[Dependence on the spectral gap]
Define the spectral gap between the two energy eigenvalues by
$\Delta_{jk}:=|E_j-E_k|$. Then
$\left|\langle\chi_k(t)|\chi_j(t)\rangle\right|
=\exp[-g^2t^2\Delta_{jk}^2/(8\sigma^2)]$.
Consequently, for fixed $g\neq0$, $t>0$, and $\sigma>0$, the overlap is
a strictly decreasing function of $\Delta_{jk}$.
\end{corollary}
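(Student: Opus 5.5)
The corollary follows directly from the overlap theorem, so I will read it off from Eq.~\eqref{eq:pointer_overlap} and then check monotonicity. First, since the Gaussian pointer states are real and the overlap in Eq.~\eqref{eq:pointer_overlap} is the exponential of a real number, the overlap is itself real and strictly positive. Its modulus therefore equals the expression itself. Because the exponent depends on $E_j-E_k$ only through $(E_j-E_k)^2=\Delta_{jk}^2$, this gives
\[
\left|\langle\chi_k(t)|\chi_j(t)\rangle\right|
=\exp\!\left[-\frac{g^2t^2\Delta_{jk}^2}{8\sigma^2}\right],
\]
which is the first claim.

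For the monotonicity statement, I will regard the right-hand side as the function $f(\Delta)=\exp(-a\Delta^2)$ on $\Delta\ge0$, where $a:=g^2t^2/(8\sigma^2)$. The hypotheses $g\neq0$, $t>0$ and $\sigma>0$ ensure $a>0$. I will then compose two monotone maps:
\begin{itemize}
\item $\Delta\mapsto -a\Delta^2$ is strictly decreasing on $[0,\infty)$, since $0\le\Delta_1<\Delta_2$ implies $\Delta_1^2<\Delta_2^2$ and $a>0$;
\item $\exp$ is strictly increasing on $\mathbb{R}$.
\end{itemize}
Hence $f$ is strictly decreasing. Equivalently, $f'(\Delta)=-2a\Delta\, e^{-a\Delta^2}<0$ for $\Delta>0$, and $f$ is continuous at $0$.

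There is no real obstacle here. The only point needing care is to restrict $\Delta$ to the nonnegative range, which holds because $\Delta_{jk}=|E_j-E_k|\ge0$. On all of $\mathbb{R}$ the map $\Delta\mapsto\Delta^2$ is not monotone.
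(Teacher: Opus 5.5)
Your proposal is correct and is essentially the paper's argument: the paper takes the modulus formula directly from Eq.~\eqref{eq:pointer_overlap} and proves strict decrease via $f'(\Delta)=-\frac{g^2t^2}{4\sigma^2}\Delta\,e^{-g^2t^2\Delta^2/(8\sigma^2)}<0$ for $\Delta>0$, which is the derivative step you also give. Your observation that the overlap is real and strictly positive (so the modulus can be dropped), and your treatment of the endpoint $\Delta=0$ by composing monotone maps or by continuity, make explicit two points the paper leaves implicit.
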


\begin{proof}
For fixed $g\neq0$, $t>0$, and $\sigma>0$, let
$f(\Delta)=\exp[-g^2t^2\Delta^2/(8\sigma^2)]$.
Differentiating with respect to $\Delta$ gives
$\frac{df}{d\Delta}
=-\frac{g^2t^2}{4\sigma^2}\Delta
\exp[-g^2t^2\Delta^2/(8\sigma^2)]<0$
for every $\Delta>0$. Hence, pointer states associated with larger
Hamiltonian spectral gaps have smaller overlap and are therefore more
distinguishable.
\end{proof}

\section{Reduced Dynamics and Purity}

The conditional translation of the pointer induces a nontrivial
reduced dynamics on the quantum system. Although the total
system--pointer state remains pure under the unitary evolution,
the reduced state of the system generally becomes mixed because
of the entanglement generated between the energy eigenstates and
the corresponding pointer states.

Let $p_j:=|c_j|^2$, with $\sum_j p_j=1$. For convenience, we also
introduce the energy gaps $\Delta_{jk}:=|E_j-E_k|$.

\begin{theorem}[Reduced system state]
Let the evolved system--pointer state be
\begin{equation}
    |\Psi(t)\rangle
    =
    \sum_j
    c_j |E_j\rangle
    \otimes
    |\chi_j(t)\rangle,
    \label{eq:joint_state_reduced}
\end{equation}
where the conditional pointer states are Gaussian and satisfy
\begin{equation}
    \langle\chi_k(t)|\chi_j(t)\rangle
    =
    \exp\left[
        -\frac{g^2t^2(E_j-E_k)^2}{8\sigma^2}
    \right].
    \label{eq:gaussian_overlap_reduced}
\end{equation}

Then the reduced density operator of the quantum system is
\begin{equation}
    \rho_S(t)
    =
    \sum_{j,k}
    c_j c_k^*
    \exp\left[
        -\frac{g^2t^2(E_j-E_k)^2}{8\sigma^2}
    \right]
    |E_j\rangle\langle E_k|.
    \label{eq:reduced_density_matrix}
\end{equation}
\end{theorem}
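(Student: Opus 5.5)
The plan is to compute the partial trace of the pure projector $|\Psi(t)\rangle\langle\Psi(t)|$ over the pointer Hilbert space $\mathcal H_P=L^2(\mathbb R)$ directly. Starting from Eq.~\eqref{eq:joint_state_reduced}, I will expand the global density operator as
\begin{equation*}
    |\Psi(t)\rangle\langle\Psi(t)|
    =
    \sum_{j,k}
    c_j c_k^*\,
    |E_j\rangle\langle E_k|
    \otimes
    |\chi_j(t)\rangle\langle\chi_k(t)|.
\end{equation*}
This is a finite double sum, because the initial state has finite spectral support. It therefore raises no convergence issues, and the partial trace can be taken term by term using linearity.

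The key identity is $\operatorname{Tr}_P\bigl(A\otimes|\phi\rangle\langle\eta|\bigr)=\langle\eta|\phi\rangle\,A$. To justify it, I will write the pointer trace in the position basis, $\operatorname{Tr}_P(\cdot)=\int dx\,\langle x|\cdot|x\rangle$, or equivalently in any orthonormal basis of $\mathcal H_P$. With $|\phi\rangle=|\chi_j(t)\rangle$ and $\langle\eta|=\langle\chi_k(t)|$, this gives
\begin{equation*}
    \int \chi_k^*(x,t)\,\chi_j(x,t)\,dx=\langle\chi_k(t)|\chi_j(t)\rangle .
\end{equation*}
Applying the identity to each term yields
\begin{equation*}
    \rho_S(t)=\sum_{j,k}c_jc_k^*\,\langle\chi_k(t)|\chi_j(t)\rangle\,|E_j\rangle\langle E_k| .
\end{equation*}

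Next I will substitute the Gaussian overlap from Eq.~\eqref{eq:gaussian_overlap_reduced}, which is the preceding overlap theorem, and this produces Eq.~\eqref{eq:reduced_density_matrix}. The ordering of indices needs care. The coefficient of $|E_j\rangle\langle E_k|$ involves $\langle\chi_k|\chi_j\rangle$, not $\langle\chi_j|\chi_k\rangle$. Here this is harmless, since the Gaussian overlap is real and symmetric in $j,k$, but I will state it explicitly.

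The main obstacle is minor: justifying the partial-trace identity in the infinite-dimensional space $L^2(\mathbb R)$. I will handle it by noting that $|\chi_j\rangle\langle\chi_k|$ is a rank-one trace-class operator with trace $\langle\chi_k|\chi_j\rangle$. As a consistency check, I will confirm that the diagonal entries equal $p_j$, so that $\operatorname{Tr}\rho_S=1$, and that $\rho_S(0)=|\psi_S\rangle\langle\psi_S|$.
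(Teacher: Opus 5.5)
Your proposal is correct and follows the paper's proof: you expand $|\Psi(t)\rangle\langle\Psi(t)|$, trace out the pointer term by term using $\operatorname{Tr}_P\bigl(|\chi_j(t)\rangle\langle\chi_k(t)|\bigr)=\langle\chi_k(t)|\chi_j(t)\rangle$, and substitute the Gaussian overlap. Your remarks on the index ordering and on the rank-one trace-class justification are sound refinements the paper leaves implicit.
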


\begin{proof}
The density operator of the total system is
$\rho_{SP}(t)=|\Psi(t)\rangle\langle\Psi(t)|$. Using
Eq.~\eqref{eq:joint_state_reduced}, we obtain
\begin{align*}
    \rho_{SP}(t)
    &=
    \left(
        \sum_j
        c_j |E_j\rangle\otimes|\chi_j(t)\rangle
    \right)
    \left(
        \sum_k
        c_k^* \langle E_k|
        \otimes
        \langle\chi_k(t)|
    \right)
    \\
    &=
    \sum_{j,k}
    c_jc_k^*
    |E_j\rangle\langle E_k|
    \otimes
    |\chi_j(t)\rangle\langle\chi_k(t)|.
\end{align*}

The reduced density operator is obtained by tracing out the pointer
degrees of freedom,
$\rho_S(t)=\operatorname{Tr}_P[\rho_{SP}(t)]$. Since
$\operatorname{Tr}_P[
|\chi_j(t)\rangle\langle\chi_k(t)|
]
=
\langle\chi_k(t)|\chi_j(t)\rangle$,
we find
$\rho_S(t)
=
\sum_{j,k}
c_jc_k^*
\langle\chi_k(t)|\chi_j(t)\rangle
|E_j\rangle\langle E_k|$.

Substituting the Gaussian overlap from
Eq.~\eqref{eq:gaussian_overlap_reduced} directly yields
Eq.~\eqref{eq:reduced_density_matrix}, which proves the result.
\end{proof}

\begin{remark}[Energy-basis populations and coherences]
For $j=k$, Eq.~\eqref{eq:reduced_density_matrix} gives
$\rho_{jj}(t)=|c_j|^2=p_j$, so the energy-basis populations remain
unchanged.

For $j\neq k$,
\begin{equation}
    \rho_{jk}(t)
    =
    \rho_{jk}(0)
    \exp\left[
        -\frac{g^2t^2\Delta_{jk}^2}{8\sigma^2}
    \right].
    \label{eq:coherence_decay}
\end{equation}

Thus, coherences associated with larger spectral separations are
suppressed more rapidly by the system--pointer interaction.
\end{remark}

\begin{theorem}[Exact purity formula]
Under the assumptions of the previous theorem, the purity of the
reduced system,
$\mathcal{P}(t):=\operatorname{Tr}\left[\rho_S^2(t)\right]$,
is given exactly by
\begin{equation}
    \mathcal{P}(t)
    =
    \sum_{j,k}
    p_jp_k
    \exp\left[
        -\frac{g^2t^2(E_j-E_k)^2}{4\sigma^2}
    \right].
    \label{eq:exact_purity}
\end{equation}

Equivalently,
\begin{equation}
    \mathcal{P}(t)
    =
    \sum_j p_j^2
    +
    2\sum_{j<k}
    p_jp_k
    \exp\left[
        -\frac{g^2t^2\Delta_{jk}^2}{4\sigma^2}
    \right].
    \label{eq:exact_purity_pairwise}
\end{equation}
\end{theorem}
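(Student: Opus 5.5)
We start from the reduced density operator of the previous theorem, Eq.~\eqref{eq:reduced_density_matrix}, and write its matrix elements in the energy basis as
$\rho_{jk}(t)=c_jc_k^*\,\Gamma_{jk}(t)$, where
$\Gamma_{jk}(t):=\exp[-g^2t^2(E_j-E_k)^2/(8\sigma^2)]$.
The key observations are that $\Gamma_{jk}$ is real and symmetric in $j,k$, and that $\{|E_j\rangle\}$ is orthonormal. These let us compute the trace of the square directly in this basis.

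First, we expand
$\rho_S^2(t)=\sum_{j,k,l,m}\rho_{jk}\rho_{lm}|E_j\rangle\langle E_k|E_l\rangle\langle E_m|$.
Orthonormality collapses this to $\sum_{j,k,m}\rho_{jk}\rho_{km}|E_j\rangle\langle E_m|$. Taking the trace sets $m=j$, which gives
$\mathcal P(t)=\sum_{j,k}\rho_{jk}\rho_{kj}$.
Hermiticity of $\rho_S$ is immediate from the symmetry and reality of $\Gamma_{jk}$, so this equals $\sum_{j,k}|\rho_{jk}|^2$.

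Second, we substitute $|\rho_{jk}|^2=|c_j|^2|c_k|^2\Gamma_{jk}^2=p_jp_k\exp[-g^2t^2(E_j-E_k)^2/(4\sigma^2)]$. Squaring the Gaussian factor doubles the exponent, turning the denominator $8\sigma^2$ into $4\sigma^2$. This yields Eq.~\eqref{eq:exact_purity}.

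To obtain Eq.~\eqref{eq:exact_purity_pairwise}, we split the double sum into its diagonal and off-diagonal parts. The diagonal terms $j=k$ have unit exponential and contribute $\sum_j p_j^2$. The off-diagonal summand is symmetric under $j\leftrightarrow k$ and depends on $E_j-E_k$ only through its square, that is, through $\Delta_{jk}^2$. The terms with $j\neq k$ therefore pair into twice the sum over $j<k$.

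There is no real obstacle here; the argument is bookkeeping. The only points needing care are using orthonormality correctly when collapsing the product, and noting that the overlaps are real so that $\rho_{kj}=\rho_{jk}^*$. If degenerate levels were allowed, the same computation holds verbatim with the corresponding exponentials equal to one.
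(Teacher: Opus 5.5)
Your proposal is correct and follows essentially the same route as the paper: expand $\rho_S^2(t)$ in the energy basis, collapse the product using orthonormality, take the trace to obtain $\sum_{j,k}|c_j|^2|c_k|^2\,\Gamma_{jk}\Gamma_{kj}$, use the symmetry of the Gaussian factors to double the exponent, and then separate the diagonal terms. Your step through Hermiticity, rewriting the summand as $|\rho_{jk}|^2$, is only a cosmetic difference from the paper's direct evaluation of $c_jc_k^*c_kc_j^*$.
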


\begin{proof}
From Eq.~\eqref{eq:reduced_density_matrix}, define
$D_{jk}(t):=
\exp[-g^2t^2(E_j-E_k)^2/(8\sigma^2)]$.
Then
$\rho_S(t)
=
\sum_{j,k}
c_jc_k^*D_{jk}(t)
|E_j\rangle\langle E_k|$.

Squaring the reduced density operator gives
\begin{align*}
    \rho_S^2(t)
    &=
    \sum_{j,k,m,n}
    c_jc_k^*c_mc_n^*
    D_{jk}(t)D_{mn}(t)
    |E_j\rangle
    \langle E_k|E_m\rangle
    \langle E_n|.
\end{align*}

Using orthonormality,
$\langle E_k|E_m\rangle=\delta_{km}$, the summation over $m$
collapses. Hence,
$\rho_S^2(t)=\sum_{j,k,n} c_jc_k^*c_kc_n^*
D_{jk}(t)D_{kn}(t)|E_j\rangle\langle E_n|$.

Taking the trace imposes $n=j$, since
$\operatorname{Tr}(|E_j\rangle\langle E_n|)=\delta_{jn}$.
Consequently,
$\mathcal{P}(t)=\sum_{j,k}|c_j|^2|c_k|^2
D_{jk}(t)D_{kj}(t)$.

Since $D_{jk}(t)=D_{kj}(t)$, we have
$D_{jk}^2(t)=
\exp[-g^2t^2(E_j-E_k)^2/(4\sigma^2)]$.
Using $p_j=|c_j|^2$, it follows that
$\mathcal{P}(t)=\sum_{j,k}p_jp_k
\exp[-g^2t^2(E_j-E_k)^2/(4\sigma^2)]$.

Separating the diagonal and off-diagonal contributions yields
\begin{align*}
    \mathcal{P}(t)
    &=
    \sum_j p_j^2
    +
    \sum_{j\neq k}
    p_jp_k
    \exp\left[
        -\frac{g^2t^2\Delta_{jk}^2}{4\sigma^2}
    \right]
    \\
    &=
    \sum_j p_j^2
    +
    2\sum_{j<k}
    p_jp_k
    \exp\left[
        -\frac{g^2t^2\Delta_{jk}^2}{4\sigma^2}
    \right].
\end{align*}

This proves both forms of the exact purity formula.
\end{proof}

\begin{theorem}[Spectral-gap bounds on purity]
Assume that the initial system state has support on at least two
populated energy eigenstates with pairwise distinct energy eigenvalues.
Define the minimum and maximum energy separations on the populated
spectral support by
$\Delta_{\min}:=
\min_{\substack{j\neq k\\ p_jp_k>0}}|E_j-E_k|$
and
$\Delta_{\max}:=
\max_{\substack{j\neq k\\ p_jp_k>0}}|E_j-E_k|$.
Let $P_{\infty}:=\sum_j p_j^2$.

Then, for every $t\geq 0$,
\begin{equation}
\begin{aligned}
    P_{\infty}
    &+
    \left(1-P_{\infty}\right)
    \exp\left[
        -\frac{g^2t^2\Delta_{\max}^2}{4\sigma^2}
    \right]
    \\
    &\leq
    \mathcal{P}(t)
    \\
    &\leq
    P_{\infty}
    +
    \left(1-P_{\infty}\right)
    \exp\left[
        -\frac{g^2t^2\Delta_{\min}^2}{4\sigma^2}
    \right].
\end{aligned}
\label{eq:purity_bounds}
\end{equation}
\end{theorem}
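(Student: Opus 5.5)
The plan is to start from the pairwise form of the exact purity formula, Eq.~\eqref{eq:exact_purity_pairwise}, restricted to populated levels. Terms with $p_jp_k=0$ contribute nothing, so only pairs with $p_jp_k>0$ survive. The first step is to record the normalization identity for the off-diagonal weights:
\[
2\sum_{j<k}p_jp_k=\Bigl(\sum_j p_j\Bigr)^2-\sum_j p_j^2=1-P_{\infty}.
\]
Thus $\mathcal P(t)-P_\infty$ is a convex-type combination, with total weight $1-P_\infty$, of the Gaussian factors $\exp[-g^2t^2\Delta_{jk}^2/(4\sigma^2)]$ over the populated pairs.

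The second step is monotonicity. For fixed $t\ge0$ the map $\Delta\mapsto\exp[-g^2t^2\Delta^2/(4\sigma^2)]$ is nonincreasing on $[0,\infty)$, by the same computation as in the Corollary on dependence on the spectral gap, with $4\sigma^2$ in place of $8\sigma^2$; at $t=0$ it is constant. Every populated pair satisfies $\Delta_{\min}\le\Delta_{jk}\le\Delta_{\max}$, and $\Delta_{\min}>0$ by the distinctness assumption. Hence each factor lies between $\exp[-g^2t^2\Delta_{\max}^2/(4\sigma^2)]$ and $\exp[-g^2t^2\Delta_{\min}^2/(4\sigma^2)]$.

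The final step multiplies these termwise inequalities by the nonnegative weights $2p_jp_k$ and sums using the normalization identity. Adding $P_\infty$ to both sides then yields Eq.~\eqref{eq:purity_bounds} for every $t\ge0$. The extrema are taken over a finite nonempty set, which is guaranteed by the finite-support and at-least-two-levels hypotheses, so they are attained.

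There is no real obstacle here. The only point needing care is that unpopulated levels must be discarded before invoking $\Delta_{\min}$ and $\Delta_{\max}$, since those levels may have gaps outside $[\Delta_{\min},\Delta_{\max}]$. Their contribution vanishes because $p_jp_k=0$ for any pair involving them.
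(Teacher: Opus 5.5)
Your proposal is correct and follows the paper's proof essentially step for step: the pairwise purity formula, termwise bounds on the Gaussian factors from $\Delta_{\min}\le\Delta_{jk}\le\Delta_{\max}$, multiplication by the nonnegative weights $2p_jp_k$, and the identity $2\sum_{j<k}p_jp_k=1-P_\infty$. Your explicit remark that pairs involving unpopulated levels contribute zero is a small clarification that the paper leaves implicit.
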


\begin{proof}
From Eq.~\eqref{eq:exact_purity_pairwise},
$\mathcal{P}(t)
=
P_{\infty}
+
2\sum_{j<k}
p_jp_k
\exp[-g^2t^2\Delta_{jk}^2/(4\sigma^2)]$.

For every populated pair $j\neq k$,
$\Delta_{\min}\leq\Delta_{jk}\leq\Delta_{\max}$.
Since the exponential factor
$\exp[-g^2t^2\Delta^2/(4\sigma^2)]$ decreases with $\Delta$
for $t\geq0$, it follows that

\begin{align*}
    \exp\left[
        -\frac{g^2t^2\Delta_{\max}^2}{4\sigma^2}
    \right]
    &\leq
    \exp\left[
        -\frac{g^2t^2\Delta_{jk}^2}{4\sigma^2}
    \right]
    \\
    &\leq
    \exp\left[
        -\frac{g^2t^2\Delta_{\min}^2}{4\sigma^2}
    \right].
\end{align*}

Multiplying by the nonnegative quantities $2p_jp_k$ and summing
over all $j<k$ gives
\begin{align*}
    2
    \exp\left[
        -\frac{g^2t^2\Delta_{\max}^2}{4\sigma^2}
    \right]
    \sum_{j<k}p_jp_k
    &\leq
    2\sum_{j<k}
    p_jp_k
    \exp\left[
        -\frac{g^2t^2\Delta_{jk}^2}{4\sigma^2}
    \right]
    \\
    &\leq
    2
    \exp\left[
        -\frac{g^2t^2\Delta_{\min}^2}{4\sigma^2}
    \right]
    \sum_{j<k}p_jp_k.
\end{align*}

Using the normalization condition $\sum_jp_j=1$, we have
$1=(\sum_jp_j)^2
=\sum_jp_j^2+2\sum_{j<k}p_jp_k$.
Therefore,
$2\sum_{j<k}p_jp_k=1-P_{\infty}$.

Substitution into the preceding inequality gives
\begin{align*}
    P_{\infty}
    &+
    (1-P_{\infty})
    \exp\left[
        -\frac{g^2t^2\Delta_{\max}^2}{4\sigma^2}
    \right]
    \leq
    \mathcal{P}(t)
    \\
    &\leq
    P_{\infty}
    +
    (1-P_{\infty})
    \exp\left[
        -\frac{g^2t^2\Delta_{\min}^2}{4\sigma^2}
    \right].
\end{align*}

This completes the proof.
\end{proof}

\begin{corollary}[Initial and asymptotic purity]
Under the assumptions of the preceding theorem,
$\mathcal{P}(0)=1$.

If all populated energy eigenvalues are distinct and $g\neq0$, then
\begin{equation}
    \lim_{t\to\infty}\mathcal{P}(t)
    =
    \sum_j p_j^2
    =
    P_{\infty}.
    \label{eq:purity_long_time}
\end{equation}
\end{corollary}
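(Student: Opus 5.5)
The plan is to read both claims directly off the exact purity formula, Eq.~\eqref{eq:exact_purity_pairwise}, and to use the spectral-gap bounds of Eq.~\eqref{eq:purity_bounds} as a squeeze for the limit.

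\textbf{Initial purity.} At $t=0$ every exponential factor in Eq.~\eqref{eq:exact_purity} equals one. Hence
\[
\mathcal{P}(0)=\sum_{j,k}p_jp_k=\Bigl(\sum_j p_j\Bigr)^2=1 .
\]
Alternatively, evaluate both sides of Eq.~\eqref{eq:purity_bounds} at $t=0$: each reduces to $P_\infty+(1-P_\infty)=1$, so $\mathcal{P}(0)=1$ is forced. This is also consistent with $\rho_S(0)=|\psi_S\rangle\langle\psi_S|$ being pure.

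\textbf{Asymptotic purity.} Under the hypotheses, the populated eigenvalues are pairwise distinct, so $\Delta_{\min}>0$, and $g\neq0$.
\begin{itemize}
\item The lower bound in Eq.~\eqref{eq:purity_bounds} is at least $P_\infty$, since $1-P_\infty\ge0$ because $P_\infty=\sum_j p_j^2\le\sum_j p_j=1$.
\item In the upper bound, the factor $\exp[-g^2t^2\Delta_{\min}^2/(4\sigma^2)]$ tends to zero as $t\to\infty$, because its exponent $g^2\Delta_{\min}^2/(4\sigma^2)$ multiplying $t^2$ is strictly positive.
\end{itemize}
The squeeze theorem then gives $\lim_{t\to\infty}\mathcal{P}(t)=P_\infty$.

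One can also argue directly: Eq.~\eqref{eq:exact_purity_pairwise} is a finite sum, by the standing finite-support assumption. Each off-diagonal term with $p_jp_k>0$ has $\Delta_{jk}>0$ and so decays to zero. Terms with $p_jp_k=0$ vanish identically.

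\textbf{Edge case.} There is no real obstacle here. The only point needing care is that the preceding theorem assumes at least two populated levels, so $\Delta_{\min}$ is well defined. If only one level is populated, then $P_\infty=1$ and $\mathcal{P}(t)\equiv1$, so the statement holds trivially. I would mention this as a remark.

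Distinctness of the populated eigenvalues is essential. If two populated levels were degenerate, their cross term would not decay, and the limit would instead be the sum over degenerate groups of the squared group weights.
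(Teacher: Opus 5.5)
Your proof is correct and follows the paper's argument. The paper gets $\mathcal{P}(0)=\sum_{j,k}p_jp_k=(\sum_j p_j)^2=1$ from Eq.~\eqref{eq:exact_purity}, and obtains the limit because each populated off-diagonal term carries a factor $\exp[-g^2t^2\Delta_{jk}^2/(4\sigma^2)]$ with $\Delta_{jk}>0$; this is exactly your ``direct'' argument, and your squeeze via Eq.~\eqref{eq:purity_bounds} merely repackages it, since those bounds are themselves obtained termwise from the same finite sum. Your added care about terms with $p_jp_k=0$ and your degenerate-case limit $\sum_E\bigl(\sum_{j:E_j=E}p_j\bigr)^2$ are both correct and slightly sharper than the paper's wording.
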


\begin{proof}
Setting $t=0$ in Eq.~\eqref{eq:exact_purity} gives
$\mathcal{P}(0)
=\sum_{j,k}p_jp_k
=(\sum_jp_j)^2=1$.

For distinct populated energy eigenvalues,
$\Delta_{jk}>0$ whenever $j\neq k$. Therefore,
$\lim_{t\to\infty}
\exp[-g^2t^2\Delta_{jk}^2/(4\sigma^2)]=0$
for every $j\neq k$. Hence, only the diagonal terms remain, yielding
$\lim_{t\to\infty}\mathcal{P}(t)=\sum_jp_j^2=P_{\infty}$.
\end{proof}

\begin{corollary}[Monotonicity of the purity]
For $g\neq0$ and $t>0$, the purity satisfies
\begin{equation}
    \frac{d\mathcal{P}(t)}{dt}
    =
    -\frac{g^2t}{\sigma^2}
    \sum_{j<k}
    p_jp_k\Delta_{jk}^2
    \exp\left[
        -\frac{g^2t^2\Delta_{jk}^2}{4\sigma^2}
    \right]
    \leq 0.
    \label{eq:purity_derivative}
\end{equation}

If the initial state contains at least two populated components
with distinct energies, then
$d\mathcal{P}(t)/dt<0$ for every $t>0$.
\end{corollary}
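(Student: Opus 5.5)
We differentiate the pairwise form of the exact purity, Eq.~\eqref{eq:exact_purity_pairwise}, term by term with respect to $t$. It is a finite sum, because the initial state has finite spectral support. The diagonal part $P_\infty=\sum_j p_j^2$ does not depend on $t$, so only the pairwise terms $2p_jp_k\exp[-g^2t^2\Delta_{jk}^2/(4\sigma^2)]$ contribute. By the chain rule,
\begin{equation*}
\frac{d}{dt}\exp\left[-\frac{g^2t^2\Delta_{jk}^2}{4\sigma^2}\right]
=-\frac{g^2t\Delta_{jk}^2}{2\sigma^2}\exp\left[-\frac{g^2t^2\Delta_{jk}^2}{4\sigma^2}\right].
\end{equation*}
Multiplying by the prefactor $2p_jp_k$ gives $-(g^2t/\sigma^2)\,p_jp_k\Delta_{jk}^2\exp[\cdots]$. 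Summing over $j<k$ then reproduces Eq.~\eqref{eq:purity_derivative} exactly.

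The weak inequality follows from the signs. For $t>0$ and $g\neq0$ the prefactor $g^2t/\sigma^2$ is strictly positive. Each summand $p_jp_k\Delta_{jk}^2\exp[\cdots]$ is nonnegative, since $p_j\geq0$, $\Delta_{jk}^2\geq0$ and the exponential is positive. Hence $d\mathcal P/dt\leq0$.

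For the strict inequality, suppose there exist indices $j_0\neq k_0$ with $p_{j_0}p_{k_0}>0$ and $E_{j_0}\neq E_{k_0}$. Then $\Delta_{j_0k_0}>0$, so the corresponding summand is strictly positive. Every other summand is nonnegative, so the whole sum is strictly positive. Multiplying by $-g^2t/\sigma^2<0$ gives $d\mathcal P(t)/dt<0$ for every $t>0$.

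None of these steps is a real obstacle. The only points requiring care are to keep track of the factor $2$ from the $j<k$ form so that the prefactor comes out as $g^2/\sigma^2$, and to note that at $t=0$ the derivative vanishes. This is why the strict statement is restricted to $t>0$.
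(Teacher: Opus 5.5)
Your proposal is correct and matches the paper's proof essentially step for step: you differentiate the pairwise form of the exact purity term by term, use the nonnegativity of every summand to get the weak inequality, and get strict negativity for $t>0$ from a single populated pair with $\Delta_{jk}>0$. Your extra remarks, that the sum is finite (which justifies differentiating term by term) and that the derivative vanishes at $t=0$, are sound refinements that the paper leaves implicit.
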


\begin{proof}
Differentiating Eq.~\eqref{eq:exact_purity_pairwise} with respect
to $t$ gives
\begin{align*}
    \frac{d\mathcal{P}(t)}{dt}
    &=
    2\sum_{j<k}p_jp_k
    \left(
        -\frac{g^2t\Delta_{jk}^2}{2\sigma^2}
    \right)
    \exp\left[
        -\frac{g^2t^2\Delta_{jk}^2}{4\sigma^2}
    \right]
    \\
    &=
    -\frac{g^2t}{\sigma^2}
    \sum_{j<k}
    p_jp_k\Delta_{jk}^2
    \exp\left[
        -\frac{g^2t^2\Delta_{jk}^2}{4\sigma^2}
    \right].
\end{align*}

Every term in the sum is nonnegative, so
$d\mathcal{P}(t)/dt\leq0$.
If at least one populated pair satisfies $\Delta_{jk}>0$, the
corresponding term is strictly positive for $t>0$, and therefore
$d\mathcal{P}(t)/dt<0$.
\end{proof}

\section{Short-Time Purity Loss and Energy Variance}

The exact purity formula derived above also provides a direct
connection between the initial loss of purity and the energy
fluctuations of the quantum system. In particular, the short-time
behavior is governed by the variance of the system Hamiltonian in
the initial state.

Let
$\langle\hat{H}\rangle
:=\langle\psi_S|\hat{H}|\psi_S\rangle
=\sum_j p_jE_j$
and
$\langle\hat{H}^2\rangle
:=\langle\psi_S|\hat{H}^2|\psi_S\rangle
=\sum_j p_jE_j^2$.
The energy variance of the initial state is defined by
$\operatorname{Var}_{\psi_S}(\hat{H})
:=\langle\hat{H}^2\rangle-\langle\hat{H}\rangle^2$.

\begin{theorem}[Short-time purity loss and Hamiltonian variance]
For the Gaussian pointer interaction considered above, the purity
of the reduced system has the short-time expansion
\begin{equation}
    \mathcal{P}(t)
    =
    1
    -
    \frac{g^2t^2}{2\sigma^2}
    \operatorname{Var}_{\psi_S}(\hat{H})
    +
    O(t^4),
    \qquad
    t\rightarrow0.
    \label{eq:short_time_purity}
\end{equation}

Equivalently, $\mathcal{P}'(0)=0$, and
\begin{equation}
    \mathcal{P}''(0)
    =
    -\frac{g^2}{\sigma^2}
    \operatorname{Var}_{\psi_S}(\hat{H}).
    \label{eq:second_derivative_purity}
\end{equation}

Thus, the initial curvature of the purity is determined exactly by
the energy variance of the initial quantum state.
\end{theorem}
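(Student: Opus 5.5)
The plan is to work directly from the exact purity formula, Eq.~\eqref{eq:exact_purity}, which expresses $\mathcal{P}(t)$ as a finite double sum of Gaussians in $t$. Because the initial state has finite spectral support, $\mathcal{P}(t)$ is a finite linear combination of functions $\exp(-a_{jk}t^2)$ with $a_{jk}=g^2(E_j-E_k)^2/(4\sigma^2)\ge 0$. It is therefore an entire, even function of $t$. Evenness immediately gives $\mathcal{P}'(0)=0$. It also shows that the Taylor expansion contains only even powers, so the remainder after the quadratic term is $O(t^4)$ rather than merely $O(t^3)$.

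The key step is to expand each exponential as $\exp(-a_{jk}t^2)=1-a_{jk}t^2+R_{jk}(t)$, where $|R_{jk}(t)|\le a_{jk}^2t^4/2$. This uses the elementary bound $|e^{-u}-1+u|\le u^2/2$ for $u\ge0$. Summing with weights $p_jp_k$ gives three contributions:
\begin{itemize}
\item The constant term is $\sum_{j,k}p_jp_k=1$, as in the corollary on initial purity.
\item The quadratic coefficient is $-\frac{g^2}{4\sigma^2}\sum_{j,k}p_jp_k(E_j-E_k)^2$.
\item The remainder is bounded by $\tfrac{t^4}{2}\sum_{j,k}p_jp_k a_{jk}^2\le \tfrac{t^4}{2}\bigl(g^2\Delta_{\max}^2/(4\sigma^2)\bigr)^2$, which is $O(t^4)$. (If only one level is populated, the remainder is trivially zero.)
\end{itemize}

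What remains is the algebraic identity
\[
\sum_{j,k}p_jp_k(E_j-E_k)^2=2\operatorname{Var}_{\psi_S}(\hat H).
\]
To prove it, I will expand $(E_j-E_k)^2=E_j^2-2E_jE_k+E_k^2$ and use $\sum_jp_j=1$. This gives $2\sum_jp_jE_j^2-2(\sum_jp_jE_j)^2=2(\langle\hat H^2\rangle-\langle\hat H\rangle^2)$. Substituting yields the quadratic coefficient $-\frac{g^2}{2\sigma^2}\operatorname{Var}_{\psi_S}(\hat H)$, which is Eq.~\eqref{eq:short_time_purity}.

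Finally, Eq.~\eqref{eq:second_derivative_purity} follows by differentiating the finite sum twice at $t=0$, or by matching Taylor coefficients, since $\mathcal{P}''(0)/2$ is the $t^2$ coefficient. As a cross-check, I will differentiate the monotonicity formula Eq.~\eqref{eq:purity_derivative} once more at $t=0$. This gives $-\frac{g^2}{\sigma^2}\sum_{j<k}p_jp_k\Delta_{jk}^2$, which agrees because $\sum_{j<k}=\tfrac12\sum_{j,k}$.

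There is no real obstacle here. The only point needing care is the variance identity, together with a uniform $O(t^4)$ remainder; the latter is immediate from finiteness of the support.
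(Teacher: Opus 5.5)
Your proposal is correct and follows essentially the same route as the paper: expand each Gaussian in the exact purity formula to first order in $t^2$, reduce the quadratic coefficient via $\sum_{j,k}p_jp_k(E_j-E_k)^2=2\operatorname{Var}_{\psi_S}(\hat H)$, and read off $\mathcal{P}''(0)$ from the $t^2$ coefficient. Your explicit remainder bound and the evenness and smoothness argument for $\mathcal{P}'(0)=0$ make rigorous what the paper does loosely by ``differentiating the short-time expansion'' (an asymptotic expansion alone does not determine derivatives without such smoothness).
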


\begin{proof}
From Eq.~\eqref{eq:exact_purity},
$\mathcal{P}(t)
=\sum_{j,k}p_jp_k
\exp[-g^2t^2(E_j-E_k)^2/(4\sigma^2)]$.

For sufficiently small $t$, we use the Taylor expansion
$e^{-x}=1-x+O(x^2)$ as $x\rightarrow0$, with
$x=g^2t^2(E_j-E_k)^2/(4\sigma^2)$. Hence,
\[
    \exp\left[
        -\frac{g^2t^2(E_j-E_k)^2}{4\sigma^2}
    \right]
    =
    1
    -
    \frac{g^2t^2(E_j-E_k)^2}{4\sigma^2}
    +
    O(t^4).
\]

Substituting this expansion into the exact purity formula gives
\begin{align*}
    \mathcal{P}(t)
    &=
    \sum_{j,k}p_jp_k
    -
    \frac{g^2t^2}{4\sigma^2}
    \sum_{j,k}
    p_jp_k(E_j-E_k)^2
    +
    O(t^4).
\end{align*}

Because $\sum_jp_j=1$, we have
$\sum_{j,k}p_jp_k=(\sum_jp_j)^2=1$.
It therefore remains to evaluate
$S:=\sum_{j,k}p_jp_k(E_j-E_k)^2$.

Expanding the square gives
\begin{align*}
    S
    &=
    \sum_{j,k}
    p_jp_k
    \left(
        E_j^2+E_k^2-2E_jE_k
    \right)
    \\
    &=
    \sum_{j,k}p_jp_kE_j^2
    +
    \sum_{j,k}p_jp_kE_k^2
    -
    2\sum_{j,k}p_jp_kE_jE_k.
\end{align*}

The first term satisfies
$\sum_{j,k}p_jp_kE_j^2
=(\sum_jp_jE_j^2)(\sum_kp_k)
=\langle\hat{H}^2\rangle$.
Similarly,
$\sum_{j,k}p_jp_kE_k^2
=\langle\hat{H}^2\rangle$.

For the third term,
$\sum_{j,k}p_jp_kE_jE_k
=(\sum_jp_jE_j)(\sum_kp_kE_k)
=\langle\hat{H}\rangle^2$.
Therefore,
\[
    S
    =
    2\langle\hat{H}^2\rangle
    -
    2\langle\hat{H}\rangle^2
    =
    2\operatorname{Var}_{\psi_S}(\hat{H}).
\]

Substituting this identity into the short-time expansion gives
\begin{align*}
    \mathcal{P}(t)
    &=
    1
    -
    \frac{g^2t^2}{4\sigma^2}
    \left[
        2\operatorname{Var}_{\psi_S}(\hat{H})
    \right]
    +
    O(t^4)
    \\
    &=
    1
    -
    \frac{g^2t^2}{2\sigma^2}
    \operatorname{Var}_{\psi_S}(\hat{H})
    +
    O(t^4).
\end{align*}

This proves Eq.~\eqref{eq:short_time_purity}.
Differentiating the short-time expansion gives
$\mathcal{P}'(0)=0$ and
$\mathcal{P}''(0)
=-(g^2/\sigma^2)\operatorname{Var}_{\psi_S}(\hat{H})$,
in agreement with Eq.~\eqref{eq:second_derivative_purity}.

Hence, the initial curvature of the purity is completely
determined by the energy variance of the initial state.
\end{proof}

\begin{corollary}[Energy eigenstates are invariant in purity]
If the initial system state is an energy eigenstate
$|\psi_S\rangle=|E_m\rangle$, then
$\operatorname{Var}_{\psi_S}(\hat{H})=0$.
Moreover, the reduced system remains pure for all interaction times,
$\mathcal{P}(t)=1$ for every $t\geq0$.
\end{corollary}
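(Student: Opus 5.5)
The plan is to read both claims directly off results already established, with no new computation. If $|\psi_S\rangle=|E_m\rangle$, the expansion coefficients in Eq.~\eqref{eq:system_state} are $c_j=\delta_{jm}$. The populations are then $p_j=\delta_{jm}$. Hence $\langle\hat H\rangle=\sum_j p_jE_j=E_m$ and $\langle\hat H^2\rangle=\sum_j p_jE_j^2=E_m^2$, so $\operatorname{Var}_{\psi_S}(\hat H)=E_m^2-E_m^2=0$. This settles the first assertion.

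For the second assertion, I would substitute $p_j=\delta_{jm}$ into the exact purity formula, Eq.~\eqref{eq:exact_purity}. The double sum then collapses to the single term $j=k=m$, whose exponent contains $(E_m-E_m)^2=0$. Therefore $\mathcal P(t)=p_m^2=1$ for every $t\geq0$. Note that I would not invoke the spectral-gap bounds here, because they assume at least two populated levels, whereas this case has exactly one.

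As a consistency check, I would also point to the reduced state. By Eq.~\eqref{eq:reduced_density_matrix}, $\rho_S(t)=|E_m\rangle\langle E_m|$ for all $t$. Equivalently, by the conditional-translation theorem, $|\Psi(t)\rangle=|E_m\rangle\otimes T(gtE_m)|\chi\rangle$ is a product state. Since the joint state never becomes entangled, the reduced state stays pure, and this picture agrees with the short-time expansion in Eq.~\eqref{eq:short_time_purity}.

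There is no real obstacle in this argument. The only point requiring care is noting that the result holds for all $t$ exactly, not just to order $t^2$. That is why I would rely on the exact formula rather than the short-time expansion. Vanishing variance alone would give only $\mathcal P''(0)=0$ and would not be enough to conclude $\mathcal P(t)=1$ at later times.
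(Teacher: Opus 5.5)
Your proof is correct, and the variance computation is the paper's own. For the purity claim, the paper's main argument is the one you give only as a consistency check. With $c_j=\delta_{jm}$, the conditional-translation theorem gives the product state $|\Psi(t)\rangle=|E_m\rangle\otimes T(gtE_m)|\chi\rangle$, so $\rho_S(t)=|E_m\rangle\langle E_m|$ and $\mathcal P(t)=1$. Your primary route instead substitutes $p_j=\delta_{jm}$ into Eq.~\eqref{eq:exact_purity}.

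The separability argument is slightly stronger. It identifies the reduced state itself, and it does not use the Gaussian form of the pointer, so it holds for any normalized $|\chi\rangle$. Your route is a one-line specialization of the exact formula and stays within the Gaussian setting of that section. Either suffices, and you are right that the $O(t^2)$ expansion alone could not give $\mathcal P(t)=1$ for all $t$.
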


\begin{proof}
If $|\psi_S\rangle=|E_m\rangle$, then $p_m=1$ and $p_j=0$
for every $j\neq m$. Therefore,
$\langle\hat{H}\rangle=E_m$ and
$\langle\hat{H}^2\rangle=E_m^2$, so
$\operatorname{Var}_{\psi_S}(\hat{H})=E_m^2-E_m^2=0$.

Furthermore, the joint state evolves as
$|\Psi(t)\rangle
=|E_m\rangle\otimes T(gtE_m)|\chi\rangle$.
Thus, the state remains separable for all $t$, and consequently
$\rho_S(t)=|E_m\rangle\langle E_m|$. It follows immediately that
$\mathcal{P}(t)=\operatorname{Tr}[\rho_S^2(t)]=1$.
\end{proof}

\section{Spectral-Gap-Controlled Purity Timescale}

The two-sided purity bounds obtained above can be used to derive
an explicit timescale for the approach of the reduced system
toward its asymptotic purity.

Recall that $P_{\infty}=\sum_j p_j^2$. From
Eq.~\eqref{eq:purity_bounds}, the upper spectral-gap bound is
\[
    \mathcal{P}(t)
    \leq
    P_{\infty}
    +
    \left(1-P_{\infty}\right)
    \exp\left[
        -\frac{g^2t^2\Delta_{\min}^2}{4\sigma^2}
    \right].
\]

We now quantify the time required for the purity to lie within a
prescribed tolerance of its asymptotic value.

\begin{theorem}[Spectral-gap-controlled purity timescale]
Assume that the initial system state has support on at least two
distinct energy eigenvalues, and let
$\Delta_{\min}
=\min_{\substack{j\neq k\\p_jp_k>0}}|E_j-E_k|>0$.
Let $\varepsilon$ satisfy
$0<\varepsilon<1-P_{\infty}$.

Define
\begin{equation}
    t_{\varepsilon}
    :=
    \frac{2\sigma}{|g|\Delta_{\min}}
    \sqrt{
        \ln\left(
            \frac{1-P_{\infty}}{\varepsilon}
        \right)
    }.
    \label{eq:purity_timescale}
\end{equation}

Then every interaction time satisfying $t\geq t_{\varepsilon}$
guarantees that
$0\leq\mathcal{P}(t)-P_{\infty}\leq\varepsilon$.
Thus, $t_{\varepsilon}$ provides a sufficient timescale for the
reduced-system purity to approach its asymptotic value within
the prescribed tolerance $\varepsilon$.
\end{theorem}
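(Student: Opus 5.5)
The plan is to read both inequalities directly off the two-sided bounds in Eq.~\eqref{eq:purity_bounds} and then invert the Gaussian upper envelope. For the lower inequality $\mathcal{P}(t)-P_{\infty}\geq 0$, I would use the exact pairwise purity formula, Eq.~\eqref{eq:exact_purity_pairwise}. It gives $\mathcal{P}(t)-P_{\infty}=2\sum_{j<k}p_jp_k\exp[-g^2t^2\Delta_{jk}^2/(4\sigma^2)]$, which is a sum of nonnegative terms. Equivalently, the lower spectral-gap bound shows that the difference is at least $(1-P_{\infty})\exp[-g^2t^2\Delta_{\max}^2/(4\sigma^2)]\geq 0$, because $1-P_{\infty}=2\sum_{j<k}p_jp_k\geq 0$.

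For the upper inequality, the upper spectral-gap bound gives
\[
\mathcal{P}(t)-P_{\infty}\leq (1-P_{\infty})\exp\left[-\frac{g^2t^2\Delta_{\min}^2}{4\sigma^2}\right].
\]
It therefore suffices to show that the right-hand side is at most $\varepsilon$ for all $t\geq t_{\varepsilon}$. The hypothesis $0<\varepsilon<1-P_{\infty}$ makes $\ln[(1-P_{\infty})/\varepsilon]>0$, so $t_{\varepsilon}$ is a well-defined positive real number; this also uses $g\neq0$ and $\Delta_{\min}>0$. Because $t\mapsto g^2t^2\Delta_{\min}^2/(4\sigma^2)$ is increasing on $t\geq0$, the envelope is nonincreasing there. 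Hence for $t\geq t_{\varepsilon}$ we have $g^2t^2\Delta_{\min}^2/(4\sigma^2)\geq g^2t_{\varepsilon}^2\Delta_{\min}^2/(4\sigma^2)=\ln[(1-P_{\infty})/\varepsilon]$. Exponentiating the negative of this gives $\exp[-\cdots]\leq \varepsilon/(1-P_{\infty})$, and multiplying by $1-P_{\infty}>0$ gives the bound $\varepsilon$.

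I expect no real obstacle here. The only points needing care are checking that the logarithm is positive, so that the square root is real and $t_{\varepsilon}>0$, and using $|g|$ so that $g^2t_{\varepsilon}^2=4\sigma^2\ln[(1-P_{\infty})/\varepsilon]/\Delta_{\min}^2$ holds exactly. As an alternative route, the monotonicity corollary (Eq.~\eqref{eq:purity_derivative}) would give the result for all $t\geq t_{\varepsilon}$ once it is checked at $t_{\varepsilon}$. The envelope argument above already covers every $t\geq t_{\varepsilon}$ directly, so I will not need it.
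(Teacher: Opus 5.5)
Your proposal is correct and follows the paper's proof essentially step for step. You bound $\mathcal{P}(t)-P_{\infty}$ by the $\Delta_{\min}$ Gaussian envelope from Eq.~\eqref{eq:purity_bounds}, invert that envelope with the logarithm, and get nonnegativity from the pairwise formula Eq.~\eqref{eq:exact_purity_pairwise}. Your explicit use of the monotonicity of the envelope to argue in the direction $t\geq t_{\varepsilon}\Rightarrow \mathcal{P}(t)-P_{\infty}\leq\varepsilon$ is slightly cleaner than the paper's chain. The paper writes its chain in the converse direction and relies implicitly on each step being an equivalence for $t\geq0$.
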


\begin{proof}
From Eq.~\eqref{eq:purity_bounds},
\[
    \mathcal{P}(t)-P_{\infty}
    \leq
    (1-P_{\infty})
    \exp\left[
        -\frac{g^2t^2\Delta_{\min}^2}{4\sigma^2}
    \right].
\]

Therefore, it is sufficient to require
\[
    (1-P_{\infty})
    \exp\left[
        -\frac{g^2t^2\Delta_{\min}^2}{4\sigma^2}
    \right]
    \leq
    \varepsilon.
\]

Since $0<\varepsilon<1-P_{\infty}$, dividing by
$1-P_{\infty}$ gives
\[
    \exp\left[
        -\frac{g^2t^2\Delta_{\min}^2}{4\sigma^2}
    \right]
    \leq
    \frac{\varepsilon}{1-P_{\infty}}.
\]

Taking the natural logarithm and multiplying by $-1$ yields
\[
    \frac{g^2t^2\Delta_{\min}^2}{4\sigma^2}
    \geq
    \ln\left(
        \frac{1-P_{\infty}}{\varepsilon}
    \right).
\]

Hence,
\[
    t^2
    \geq
    \frac{4\sigma^2}{g^2\Delta_{\min}^2}
    \ln\left(
        \frac{1-P_{\infty}}{\varepsilon}
    \right).
\]

Since $t\geq0$, it follows that
\[
    t
    \geq
    \frac{2\sigma}{|g|\Delta_{\min}}
    \sqrt{
        \ln\left(
            \frac{1-P_{\infty}}{\varepsilon}
        \right)
    }
    =
    t_{\varepsilon}.
\]

Therefore, every $t\geq t_{\varepsilon}$ satisfies
$\mathcal{P}(t)-P_{\infty}\leq\varepsilon$.

Furthermore, Eq.~\eqref{eq:exact_purity_pairwise} shows that
$\mathcal{P}(t)-P_{\infty}$ is a sum of nonnegative terms.
Consequently, $\mathcal{P}(t)\geq P_{\infty}$, and hence
$0\leq\mathcal{P}(t)-P_{\infty}\leq\varepsilon$.

This completes the proof.
\end{proof}

\begin{corollary}[Scaling of the purity timescale]
For fixed $P_{\infty}$ and fixed tolerance $\varepsilon$, the
sufficient purity timescale satisfies
$t_{\varepsilon}\propto\sigma/(|g|\Delta_{\min})$.

Consequently:
\begin{enumerate}
    \item increasing the minimum populated spectral gap
    $\Delta_{\min}$ decreases the purity timescale;

    \item increasing the coupling strength $|g|$ decreases the
    purity timescale;

    \item increasing the pointer width $\sigma$ increases the
    purity timescale.
\end{enumerate}
\end{corollary}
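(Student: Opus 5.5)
The plan is to read the corollary directly off the explicit formula \eqref{eq:purity_timescale} in the preceding theorem. I will factor $t_{\varepsilon}$ as
\[
t_{\varepsilon}=C(P_{\infty},\varepsilon)\,\frac{\sigma}{|g|\Delta_{\min}},
\qquad
C(P_{\infty},\varepsilon):=2\sqrt{\ln\!\left(\frac{1-P_{\infty}}{\varepsilon}\right)}.
\]
Because $0<\varepsilon<1-P_{\infty}$, the argument of the logarithm exceeds one. The constant $C$ is therefore well defined and strictly positive, and it depends only on $P_{\infty}$ and $\varepsilon$. Holding these fixed gives the proportionality $t_{\varepsilon}\propto\sigma/(|g|\Delta_{\min})$ with a positive constant.

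For the three monotonicity claims, I will treat $t_{\varepsilon}$ as a function of one variable at a time. By the factorization it is a positive constant times $\sigma$, times $1/|g|$, and times $1/\Delta_{\min}$. Hence it is strictly increasing in $\sigma>0$ and strictly decreasing in $|g|>0$ and in $\Delta_{\min}>0$. If desired, this can be confirmed by differentiating, for example $\partial t_{\varepsilon}/\partial\Delta_{\min}=-t_{\varepsilon}/\Delta_{\min}<0$, with analogous one-line computations for $|g|$ and $\sigma$.

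The only real subtlety is what "fixed $P_{\infty}$" means when $\Delta_{\min}$ varies. Changing $\Delta_{\min}$ means changing the populated energies $E_j$, not the weights $p_j$, so $P_{\infty}=\sum_j p_j^2$ is unaffected. Changing $g$ or $\sigma$ does not touch the $p_j$ either. So varying any one parameter does not alter $C$, and the admissibility condition $\varepsilon<1-P_{\infty}$ continues to hold.
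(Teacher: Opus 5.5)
Your proposal is correct and follows the same route as the paper, which also reads the scaling directly off Eq.~\eqref{eq:purity_timescale}, noting that the factor $\sqrt{\ln[(1-P_{\infty})/\varepsilon]}$ is constant when $P_{\infty}$ and $\varepsilon$ are fixed. Your additional remarks make explicit what the paper leaves implicit: strict positivity of the constant follows from $\varepsilon<1-P_{\infty}$, and $P_{\infty}$ does not change when $\Delta_{\min}$, $|g|$, or $\sigma$ is varied.
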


\begin{proof}
The result follows directly from Eq.~\eqref{eq:purity_timescale},
since
$\sqrt{\ln[(1-P_{\infty})/\varepsilon]}$
is constant when $P_{\infty}$ and $\varepsilon$ are fixed.
\end{proof}

\section{Sharpness for Two-Level Systems}

We now show that the spectral-gap purity bounds become exact for a
two-level system. Thus, the bounds derived above are sharp.

Consider the normalized initial state
$|\psi_S\rangle=c_1|E_1\rangle+c_2|E_2\rangle$,
where $p_1=|c_1|^2$, $p_2=|c_2|^2$, and $p_1+p_2=1$.
Define the unique nonzero energy gap by
$\Delta:=|E_1-E_2|$.

\begin{proposition}[Exact purity for a two-level system]
For the two-level system defined above, the reduced-system purity is
\begin{equation}
    \mathcal{P}(t)
    =
    p_1^2+p_2^2
    +
    2p_1p_2
    \exp\left[
        -\frac{g^2t^2\Delta^2}{4\sigma^2}
    \right].
    \label{eq:two_level_purity}
\end{equation}

Moreover, $\Delta_{\min}=\Delta_{\max}=\Delta$, and therefore both
the lower and upper bounds in Eq.~\eqref{eq:purity_bounds} coincide
with the exact purity. Hence, the spectral-gap bounds are sharp for
every two-level system with two populated energy eigenstates.
\end{proposition}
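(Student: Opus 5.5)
The plan is to specialize the exact purity formula of Eq.~\eqref{eq:exact_purity_pairwise} to a support consisting of exactly two populated levels. With the index set $\{1,2\}$, the double sum $\sum_{j<k}$ contains the single pair $(j,k)=(1,2)$, and $\Delta_{12}=|E_1-E_2|=\Delta$. Substituting directly gives
\[
\mathcal{P}(t)=p_1^2+p_2^2+2p_1p_2\exp\left[-\frac{g^2t^2\Delta^2}{4\sigma^2}\right],
\]
which is Eq.~\eqref{eq:two_level_purity}. No new computation is needed beyond invoking the exact purity theorem, since its hypotheses (the joint state of Eq.~\eqref{eq:joint_state_reduced} with Gaussian overlaps) hold for any finite-support initial state.

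For the sharpness claim, I will first note that the minimum and maximum in the definitions of $\Delta_{\min}$ and $\Delta_{\max}$ range over pairs $j\neq k$ with $p_jp_k>0$. In the two-level setting, populated means $p_1,p_2>0$ and the energies are assumed distinct, which is the hypothesis of the spectral-gap bounds theorem. The only admissible unordered pair is $\{1,2\}$, so both extrema equal $\Delta>0$. Then $P_\infty=p_1^2+p_2^2$, and by normalization, $1-P_\infty=(p_1+p_2)^2-p_1^2-p_2^2=2p_1p_2$.

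Substituting these into Eq.~\eqref{eq:purity_bounds}, both the lower and upper bounds become $p_1^2+p_2^2+2p_1p_2\exp[-g^2t^2\Delta^2/(4\sigma^2)]$, which coincides with the exact expression just derived. Hence equality holds in both inequalities for every $t\geq0$, so neither bound can be improved in general.

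The only mildly delicate point is the bookkeeping of hypotheses. I must make explicit that sharpness requires both components to be populated and $E_1\neq E_2$; otherwise $\Delta_{\min},\Delta_{\max}$ are undefined and the bound theorem does not apply. If one $p_j=0$, the formula still holds trivially with $\mathcal{P}\equiv1$, and I would remark on that separately.
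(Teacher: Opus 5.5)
Your proposal is correct and follows essentially the same route as the paper. The paper writes out the four terms of the full double sum in Eq.~\eqref{eq:exact_purity} instead of starting from the pairwise form Eq.~\eqref{eq:exact_purity_pairwise}. It then likewise uses $\Delta_{\min}=\Delta_{\max}=\Delta$ and $1-P_\infty=2p_1p_2$ to show that both bounds collapse to the exact purity for all $t\geq 0$. Your explicit statement of the hypotheses ($p_1p_2>0$, $E_1\neq E_2$) is a useful clarification that the paper leaves implicit.
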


\begin{proof}
From Eq.~\eqref{eq:exact_purity}, the four contributions for a
two-level system are
\begin{align*}
    \mathcal{P}(t)
    &=
    p_1^2+p_2^2
    \\
    &\quad+
    p_1p_2
    \exp\left[
        -\frac{g^2t^2(E_1-E_2)^2}{4\sigma^2}
    \right]
    \\
    &\quad+
    p_2p_1
    \exp\left[
        -\frac{g^2t^2(E_2-E_1)^2}{4\sigma^2}
    \right].
\end{align*}

Since $(E_1-E_2)^2=(E_2-E_1)^2=\Delta^2$, this reduces to
Eq.~\eqref{eq:two_level_purity}.

Furthermore, because there is only one populated nonzero spectral
separation, $\Delta_{\min}=\Delta_{\max}=\Delta$.
Also, $P_{\infty}=p_1^2+p_2^2$, and using $p_1+p_2=1$, we have
$1-P_{\infty}=2p_1p_2$.

Therefore, both sides of the general purity bound reduce to
\[
    P_{\infty}
    +
    (1-P_{\infty})
    \exp\left[
        -\frac{g^2t^2\Delta^2}{4\sigma^2}
    \right],
\]
which is exactly Eq.~\eqref{eq:two_level_purity}. Hence, the bound
is saturated for all $t\geq0$.
\end{proof}

\begin{corollary}[Balanced two-level superposition]
For
$|\psi_S\rangle
=2^{-1/2}(|E_1\rangle+e^{i\varphi}|E_2\rangle)$,
we have $p_1=p_2=1/2$, and therefore
\begin{equation}
    \mathcal{P}(t)
    =
    \frac{1}{2}
    \left[
        1+
        \exp\left(
            -\frac{g^2t^2\Delta^2}{4\sigma^2}
        \right)
    \right].
    \label{eq:balanced_two_level_purity}
\end{equation}

In particular, $\mathcal{P}(0)=1$, whereas
$\lim_{t\rightarrow\infty}\mathcal{P}(t)=1/2$.
\end{corollary}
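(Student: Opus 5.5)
The corollary is a direct specialization of the preceding proposition on the exact two-level purity, so the plan is to substitute the balanced weights into Eq.~\eqref{eq:two_level_purity} and then read off the two limiting values.

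\textbf{Step 1: the weights.} For $|\psi_S\rangle=2^{-1/2}(|E_1\rangle+e^{i\varphi}|E_2\rangle)$ we have $c_1=2^{-1/2}$ and $c_2=2^{-1/2}e^{i\varphi}$. Hence $p_1=|c_1|^2=1/2$ and $p_2=|c_2|^2=1/2$, since $|e^{i\varphi}|=1$. The relative phase $\varphi$ therefore drops out entirely. This is consistent with Eq.~\eqref{eq:exact_purity}, where only the populations $p_j$ enter.

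\textbf{Step 2: the purity.} We assume $E_1\neq E_2$, so that $\Delta>0$ as in the two-level setup. Inserting the weights into Eq.~\eqref{eq:two_level_purity} gives
\[
p_1^2+p_2^2=\tfrac12,
\qquad
2p_1p_2=\tfrac12,
\]
and factoring out $1/2$ yields Eq.~\eqref{eq:balanced_two_level_purity}.

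\textbf{Step 3: the limits.} At $t=0$ the exponential equals $1$, so $\mathcal P(0)=\tfrac12(1+1)=1$. This agrees with the earlier corollary on initial and asymptotic purity. As $t\to\infty$, with $g\neq0$ and $\Delta>0$, the exponent $-g^2t^2\Delta^2/(4\sigma^2)$ tends to $-\infty$, so the exponential vanishes and $\mathcal P(t)\to 1/2$. Equivalently, the limit is $P_\infty=p_1^2+p_2^2=1/2$, again by the same corollary.

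There is no real obstacle here. The only point needing care is to state explicitly that the phase $\varphi$ does not affect the populations, and that the nondegeneracy $\Delta>0$ together with $g\neq0$ is what produces the nontrivial limit $1/2$.
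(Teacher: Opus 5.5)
Your proposal is correct and matches the paper's proof: substitute $p_1=p_2=1/2$ into Eq.~\eqref{eq:two_level_purity}, simplify to Eq.~\eqref{eq:balanced_two_level_purity}, and read off the values at $t=0$ and $t\to\infty$. Your remarks that the phase $\varphi$ drops out of the populations and that $g\neq0$ and $\Delta>0$ are needed for the limit $1/2$ are useful details the paper leaves implicit.
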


\begin{proof}
For the balanced state, $p_1=p_2=1/2$. Substituting these values
into Eq.~\eqref{eq:two_level_purity} gives
\begin{align*}
    \mathcal{P}(t)
    &=
    \frac{1}{4}
    +
    \frac{1}{4}
    +
    \frac{1}{2}
    \exp\left[
        -\frac{g^2t^2\Delta^2}{4\sigma^2}
    \right]
    \\
    &=
    \frac{1}{2}
    \left[
        1+
        \exp\left(
            -\frac{g^2t^2\Delta^2}{4\sigma^2}
        \right)
    \right].
\end{align*}

The limits follow immediately.
\end{proof}

\section{Equally Spaced $N$-Level Spectrum}

To illustrate the analytical results, we consider an $N$-level
system with equally spaced energy eigenvalues
$E_j=j\Delta$, where $j=0,1,\ldots,N-1$ and $\Delta>0$
denotes the fundamental level spacing.

We assume that the initial system state is the equal superposition
$|\psi_S\rangle=N^{-1/2}\sum_{j=0}^{N-1}|E_j\rangle$.
Hence, $p_j=1/N$ for all $j$.

\begin{proposition}[Exact purity for an equally spaced spectrum]
For the equally spaced $N$-level system defined above, the
reduced-system purity is
\begin{equation}
    \mathcal{P}(t)
    =
    \frac{1}{N}
    +
    \frac{2}{N^2}
    \sum_{m=1}^{N-1}
    (N-m)
    \exp\left[
        -\frac{g^2t^2\Delta^2m^2}{4\sigma^2}
    \right].
    \label{eq:N_level_exact_purity}
\end{equation}
\end{proposition}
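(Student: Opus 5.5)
The plan is to specialize the exact purity formula, Eq.~\eqref{eq:exact_purity_pairwise}, to the present data and then reorganize the pairwise sum by energy separation. With $p_j=1/N$ for all $j=0,\dots,N-1$, the diagonal contribution is
\[
\sum_j p_j^2=N\cdot\frac{1}{N^2}=\frac{1}{N}.
\]
This already produces the first term of Eq.~\eqref{eq:N_level_exact_purity}, and it also identifies $P_\infty=1/N$.

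For the off-diagonal part, every pair carries the same weight $p_jp_k=1/N^2$. Since $E_j=j\Delta$, the gap of a pair $j<k$ is $\Delta_{jk}=(k-j)\Delta$, so each term depends only on $m:=k-j\in\{1,\dots,N-1\}$. I would therefore regroup
\[
2\sum_{0\le j<k\le N-1}\frac{1}{N^2}\exp\!\left[-\frac{g^2t^2\Delta^2(k-j)^2}{4\sigma^2}\right]
\]
as a sum over $m$. For fixed $m$, I count the pairs with $k-j=m$: the index $j$ can range over $0,\dots,N-1-m$, which gives $N-m$ pairs.

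The only step needing care is this counting. As a check, I would verify $\sum_{m=1}^{N-1}(N-m)=N(N-1)/2$, which is the total number of pairs. Substituting the count gives the second term of Eq.~\eqref{eq:N_level_exact_purity} directly.

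For a final consistency check, setting $t=0$ gives
\[
\frac{1}{N}+\frac{2}{N^2}\cdot\frac{N(N-1)}{2}=1,
\]
which agrees with $\mathcal P(0)=1$.
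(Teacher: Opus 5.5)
Your proposal is correct and takes essentially the same approach as the paper. The paper starts from the double-sum form Eq.~\eqref{eq:exact_purity} and counts $2(N-m)$ ordered pairs with $|j-k|=m$, whereas you start from the pairwise form Eq.~\eqref{eq:exact_purity_pairwise} and count $N-m$ unordered pairs with the explicit factor $2$; these are the same count.
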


\begin{proof}
Starting from Eq.~\eqref{eq:exact_purity} and using
$p_j=1/N$ and $E_j-E_k=(j-k)\Delta$, we obtain
\[
    \mathcal{P}(t)
    =
    \frac{1}{N^2}
    \sum_{j,k=0}^{N-1}
    \exp\left[
        -\frac{g^2t^2\Delta^2(j-k)^2}{4\sigma^2}
    \right].
\]

The diagonal contribution $j=k$ consists of $N$ terms and therefore
equals $1/N$. For a fixed positive integer $m=|j-k|$, there are
$2(N-m)$ ordered pairs $(j,k)$ satisfying $|j-k|=m$. Hence, the
off-diagonal contribution is
\[
    \frac{2}{N^2}
    \sum_{m=1}^{N-1}
    (N-m)
    \exp\left[
        -\frac{g^2t^2\Delta^2m^2}{4\sigma^2}
    \right].
\]

Combining the diagonal and off-diagonal contributions yields
Eq.~\eqref{eq:N_level_exact_purity}.
\end{proof}

\begin{corollary}[Asymptotic purity]
For the equally weighted $N$-level state,
\begin{equation}
    \lim_{t\to\infty}\mathcal{P}(t)
    =
    \frac{1}{N}.
    \label{eq:N_level_asymptotic_purity}
\end{equation}
\end{corollary}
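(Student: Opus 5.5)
The plan is to deduce the limit directly from the exact finite-sum representation in Eq.~\eqref{eq:N_level_exact_purity}. An alternative route also works: apply the general asymptotic corollary, Eq.~\eqref{eq:purity_long_time}, and compute $P_\infty=\sum_{j=0}^{N-1}(1/N)^2=1/N$.

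For that second route I would first check its hypotheses. We need $g\neq0$, which holds for a nontrivial interaction. We also need the populated energies to be pairwise distinct. This is true here because $E_j=j\Delta$ with $\Delta>0$, so $E_j-E_k=(j-k)\Delta\neq0$ whenever $j\neq k$.

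For the self-contained route, I would observe that the sum in Eq.~\eqref{eq:N_level_exact_purity} is finite, running over $m=1,\dots,N-1$. Each term is $(N-m)\exp[-g^2t^2\Delta^2m^2/(4\sigma^2)]$. Since $g\neq0$, $\Delta>0$ and $m\geq1$, the exponent coefficient $g^2\Delta^2m^2/(4\sigma^2)$ is strictly positive. Hence each exponential tends to $0$ as $t\to\infty$. A finite sum of terms tending to zero tends to zero, so only the diagonal contribution $1/N$ survives.

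There is no genuine obstacle. The only point requiring care is the case $N=1$: there the sum is empty and $\mathcal{P}(t)\equiv1=1/N$ trivially. I would note this case separately, since the general corollary assumed at least two populated levels.
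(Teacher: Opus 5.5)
Your proof is correct and follows the paper's own argument: every $m\ge1$ term of the finite sum in Eq.~\eqref{eq:N_level_exact_purity} decays to zero as $t\to\infty$, leaving only the diagonal contribution $1/N$. Stating $g\neq0$ explicitly (the paper uses it only implicitly) and noting the trivial $N=1$ case are small refinements.
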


\begin{proof}
Since $\Delta>0$, every term with $m\geq1$ in
Eq.~\eqref{eq:N_level_exact_purity} vanishes as $t\to\infty$.
Hence, only the diagonal contribution remains, giving
$\lim_{t\to\infty}\mathcal{P}(t)=1/N$.
\end{proof}

\begin{figure}[!htbp]
    \centering
    \includegraphics[width=0.72\linewidth]{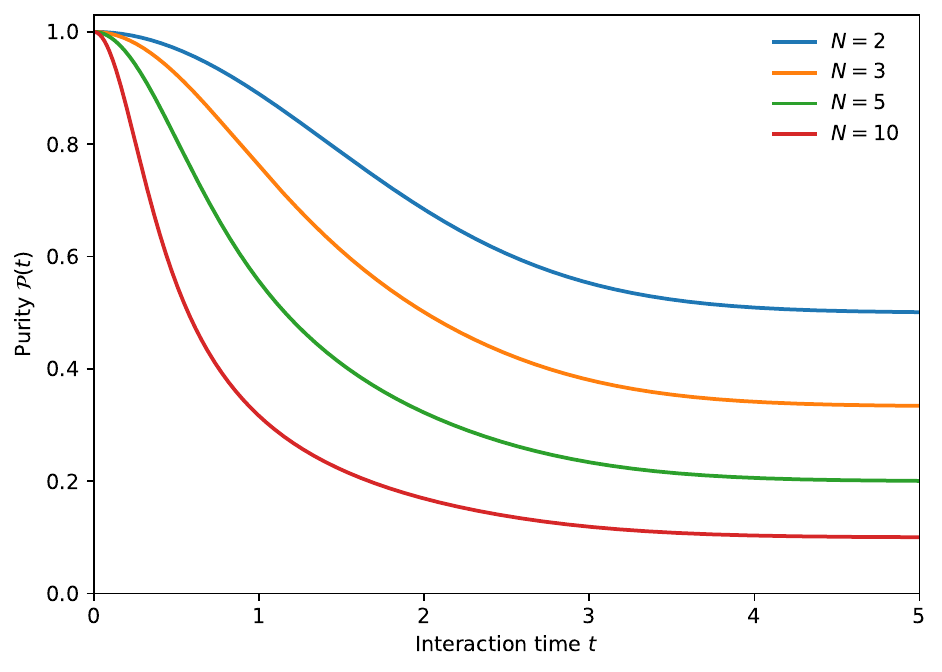}
    \caption{
    Time evolution of the reduced-system purity for equally weighted
    systems with $N=2,3,5,$ and $10$ equally spaced energy levels.
    The parameters are fixed at $g=1$, $\Delta=1$, and $\sigma=1$.
    The purity starts from $\mathcal{P}(0)=1$ and approaches the
    asymptotic value $\mathcal{P}(\infty)=1/N$.
    }
    \label{fig:purity_N}
\end{figure}

Figure~\ref{fig:purity_N} illustrates the dimensional dependence
predicted by Eq.~\eqref{eq:N_level_asymptotic_purity}. Although all
curves originate from the pure initial value $\mathcal{P}(0)=1$,
their long-time limits depend explicitly on the number of populated
energy levels. In particular, increasing $N$ lowers the asymptotic
purity to $1/N$, reflecting the increasing number of populated
energy components whose associated pointer states become
asymptotically distinguishable.

\begin{corollary}[Purity bounds for an equally spaced spectrum]
For the equally spaced $N$-level system,
$\Delta_{\min}=\Delta$, $\Delta_{\max}=(N-1)\Delta$, and
$P_{\infty}=1/N$. Therefore,
\begin{equation}
\begin{aligned}
    \frac{1}{N}
    &+
    \left(1-\frac{1}{N}\right)
    \exp\left[
        -\frac{g^2t^2(N-1)^2\Delta^2}{4\sigma^2}
    \right]
    \\
    &\leq
    \mathcal{P}(t)
    \\
    &\leq
    \frac{1}{N}
    +
    \left(1-\frac{1}{N}\right)
    \exp\left[
        -\frac{g^2t^2\Delta^2}{4\sigma^2}
    \right].
\end{aligned}
\label{eq:N_level_bounds}
\end{equation}
\end{corollary}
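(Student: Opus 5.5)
This is a direct specialization of the spectral-gap bounds theorem, Eq.~\eqref{eq:purity_bounds}. I would first check that its hypotheses hold. For $N\geq2$ the equal superposition populates all $N$ levels, and the energies $E_j=j\Delta$ are pairwise distinct because $\Delta>0$. So the state has support on at least two populated eigenstates with distinct eigenvalues. (For $N=1$ the statement degenerates to $\mathcal{P}(t)=1$, which is the energy-eigenstate corollary; I would note this separately.)

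Next I would compute the three state-dependent quantities. Since $p_j=1/N>0$ for every $j$, every pair $j\neq k$ is populated. Hence $\Delta_{\min}$ and $\Delta_{\max}$ are the minimum and maximum of $|E_j-E_k|=|j-k|\Delta$ over $j\neq k$ in $\{0,\dots,N-1\}$. The integer $|j-k|$ ranges exactly over $\{1,\dots,N-1\}$: the value $1$ is attained by $(j,k)=(1,0)$ and the value $N-1$ by $(N-1,0)$. This gives $\Delta_{\min}=\Delta$ and $\Delta_{\max}=(N-1)\Delta$. Also $P_\infty=\sum_j p_j^2=N\cdot N^{-2}=1/N$, which agrees with the asymptotic-purity corollary, Eq.~\eqref{eq:N_level_asymptotic_purity}.

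Finally I would substitute these values into Eq.~\eqref{eq:purity_bounds}, using $1-P_\infty=1-1/N$ and $\Delta_{\max}^2=(N-1)^2\Delta^2$; this gives Eq.~\eqref{eq:N_level_bounds} directly. As an optional consistency check, the same inequalities follow from Eq.~\eqref{eq:N_level_exact_purity}. There one bounds each factor $\exp[-g^2t^2\Delta^2m^2/(4\sigma^2)]$ between its values at $m=N-1$ and $m=1$, and uses the identity $\frac{2}{N^2}\sum_{m=1}^{N-1}(N-m)=\frac{N-1}{N}$.

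There is no real obstacle. The only point needing care is confirming that the extremal gaps are actually attained on the populated support, which holds because all levels are populated, together with the trivial $N=1$ edge case.
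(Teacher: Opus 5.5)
Your proposal is correct and takes the same route as the paper, which gives no separate proof and treats the corollary as direct substitution of $\Delta_{\min}=\Delta$, $\Delta_{\max}=(N-1)\Delta$ and $P_\infty=1/N$ into Eq.~\eqref{eq:purity_bounds}. The paper leaves implicit your hypothesis check, your $N=1$ remark and your cross-check via Eq.~\eqref{eq:N_level_exact_purity} with $\frac{2}{N^2}\sum_{m=1}^{N-1}(N-m)=\frac{N-1}{N}$; all three are sound.
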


\begin{figure}[!htbp]
    \centering
    \includegraphics[width=0.72\linewidth]{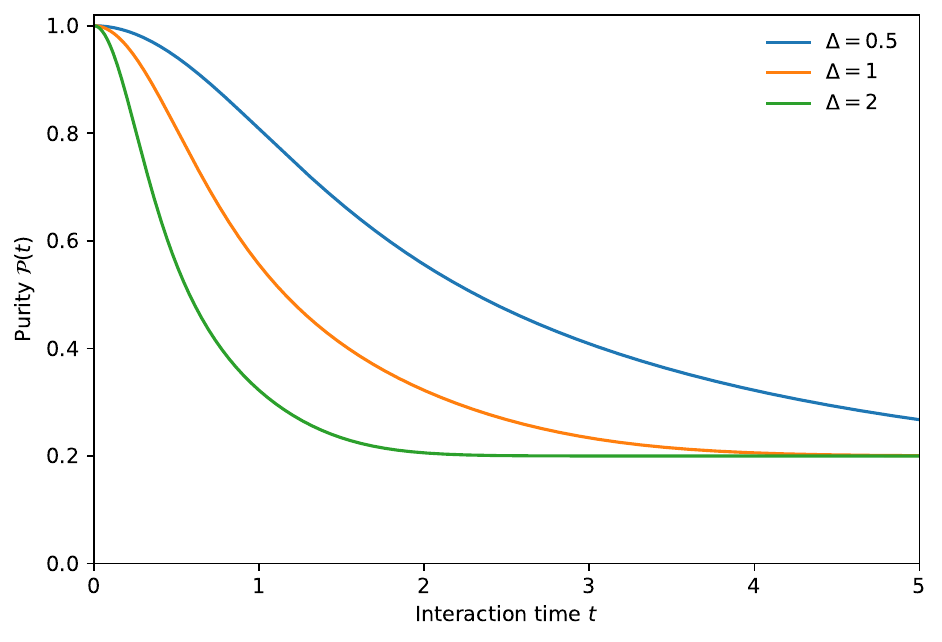}
    \caption{
    Dependence of the reduced-system purity on the fundamental
    spectral spacing $\Delta$ for an equally weighted five-level
    system with $g=1$ and $\sigma=1$. Increasing the spectral
    spacing accelerates the suppression of the off-diagonal
    coherences and leads to a faster approach toward the
    asymptotic purity $\mathcal{P}(\infty)=1/N$.
    }
    \label{fig:purity_gap}
\end{figure}

Figure~\ref{fig:purity_gap} confirms the spectral-gap dependence
predicted analytically. Larger values of $\Delta$ produce more
rapidly distinguishable conditional pointer states and hence
accelerate the decay of the off-diagonal contributions to the
purity. This behavior is consistent with both the exact expression
and the spectral-gap-controlled timescale derived above.

\subsection{Influence of the Pointer Width}

The analytical timescale derived above predicts the scaling
$t_{\varepsilon}\propto\sigma/(|g|\Delta_{\min})$.
Therefore, the width of the initial pointer wave packet provides a
direct control parameter for the rate of purity loss.

For the equally spaced model considered above, however, the dependence
on the pointer width is not independent of the spectral-spacing
dependence. Indeed, the exact purity can be written as
\begin{equation}
    \mathcal{P}(t)
    =
    \frac{1}{N}
    +
    \frac{2}{N^2}
    \sum_{m=1}^{N-1}
    (N-m)
    \exp\left[
        -\frac{g^2t^2m^2}{4}
        \left(\frac{\Delta}{\sigma}\right)^2
    \right].
    \label{eq:purity_gap_width_ratio}
\end{equation}

Thus, for fixed $N$ and $g$, the effects of the spectral spacing
$\Delta$ and the pointer width $\sigma$ enter the purity dynamics
through the ratio $\Delta/\sigma$. Consequently, increasing
$\Delta$ has the same dynamical effect as decreasing $\sigma$ by the
corresponding factor.

This observation explains why a separate pointer-width plot would
reproduce the same family of purity curves already displayed for
different spectral spacings, up to a relabeling of the parameter
values. A narrower pointer produces more rapidly distinguishable
conditional pointer states and therefore accelerates the loss of
purity, whereas a broader pointer increases their overlap and slows
the purity reduction.

\section{Discussion}
\label{sec:discussion}

The results obtained in this work provide a direct connection between
the spectral structure of the system Hamiltonian and the purity loss
generated by its interaction with a continuous-variable pointer.
Although the underlying Hamiltonian--pointer coupling produces the
standard conditional translation of the pointer wave packet, tracing
out the pointer leads to a reduced dynamics whose purity can be
characterized explicitly in terms of the populated energy gaps.

For an initially Gaussian pointer, the overlap between two
conditionally translated pointer states associated with energies
$E_j$ and $E_k$ is governed by
Eq.~\eqref{eq:pointer_overlap}. Consequently, coherences between
energy components with larger spectral separation are suppressed more
rapidly. This establishes a simple physical interpretation of the
reduced dynamics: the loss of coherence is controlled by the
distinguishability of the corresponding pointer states. Large energy
separations generate larger conditional translations and therefore
reduce the overlap between the associated pointer wave packets.

At the level of purity, this mechanism becomes particularly
transparent. Equation~\eqref{eq:exact_purity_pairwise} shows that the
purity is a weighted sum of gap-dependent decay factors, with
$P_{\infty}=\sum_jp_j^2$ representing the asymptotic contribution
associated with the populations. Thus, the long-time purity depends
on the population distribution over the energy eigenstates, whereas
the rate at which this limit is approached depends on the spectral
separations between the populated levels.

A central consequence of this representation is the two-sided
spectral-gap estimate in Eq.~\eqref{eq:purity_bounds}. Importantly,
$\Delta_{\min}$ and $\Delta_{\max}$ are defined on the populated
spectral support rather than on the entire spectrum of the
Hamiltonian. Energy levels that are absent from the initial state
therefore do not influence the resulting purity estimates. This
makes the bounds explicitly state dependent and avoids introducing
spectral scales that are dynamically irrelevant for the chosen
initial condition.

The roles of the two extremal gaps are complementary. The largest
populated separation $\Delta_{\max}$ controls the lower bound,
whereas the smallest populated nonzero separation $\Delta_{\min}$
controls the upper bound and hence the slowest contribution to the
approach toward $P_{\infty}$. In this sense, the smallest populated
spectral gap acts as a bottleneck for the long-time suppression of
the remaining coherence. This observation also explains the
appearance of $\Delta_{\min}$ in the sufficient purity timescale
derived in this work.

Indeed, Eq.~\eqref{eq:purity_timescale} makes the relevant physical
scales explicit. Stronger coupling and larger minimum populated
energy separation reduce the sufficient interaction time, whereas
a broader pointer wave packet increases it. The latter behavior
follows naturally from the fact that two translated Gaussian packets
with a fixed separation have a larger overlap when their spatial
width is increased. The numerical behavior shown above is consistent
with these analytical scaling relations.

The short-time regime reveals a different but complementary
connection between purity and the energy distribution. In particular,
Eqs.~\eqref{eq:short_time_purity} and
\eqref{eq:second_derivative_purity} show that the energy variance
determines the initial curvature of the purity. Hence, two different
spectral quantities characterize two distinct regimes of the
evolution. The energy variance determines the initial purity loss,
whereas the minimum populated spectral gap determines a sufficient
long-time scale for approaching the asymptotic value. The former
contains information about the entire weighted energy distribution,
whereas the latter identifies the slowest spectral separation
relevant to the initial state.

The energy-eigenstate limit provides an instructive special case.
If the system initially occupies a single energy eigenstate, its
energy variance vanishes and no entanglement between distinct energy
sectors can be generated. The pointer is translated as a whole,
while the system remains in the same pure energy state. Accordingly,
$\mathcal{P}(t)=1$ for all interaction times. This illustrates that
the purity reduction considered here originates from the coexistence
of distinct populated energy components rather than from the pointer
translation itself.

The two-level case further demonstrates that the general
spectral-gap bounds are not merely qualitative estimates. When two
distinct energy eigenstates are populated, there is only one nonzero
spectral separation, so
$\Delta_{\min}=\Delta_{\max}=\Delta$. The lower and upper bounds then
coincide with the exact purity for all interaction times. Thus, the
general bounds are saturated in the two-level case, establishing
their sharpness within this class of states.

The equally spaced $N$-level example illustrates how these results
extend beyond two-level systems. Equation~\eqref{eq:N_level_exact_purity}
provides the corresponding exact purity, with
$\mathcal{P}(\infty)=1/N$. Increasing the number of equally populated
energy levels therefore reduces the asymptotic purity, while
increasing the fundamental spacing $\Delta$ accelerates the approach
toward this limiting value. Together with the pointer-width
dependence, these examples illustrate separately the roles played by
the spectral dimension, spectral separation, and pointer resolution.

The present analysis is nevertheless based on several simplifying
assumptions. The pointer is initialized in a Gaussian pure state,
the interaction is generated solely by the bilinear coupling
$g\hat H\otimes\hat p$, and additional free evolution or
environmental noise is not included. Moreover, the asymptotic
expressions assume distinct populated energy eigenvalues; spectral
degeneracies require grouping contributions belonging to equal
energies. Extensions to non-Gaussian or mixed pointer states,
time-dependent coupling strengths, degenerate spectra, and open
system dynamics may therefore lead to more general forms of the
purity bounds and constitute natural directions for further study.

\section{Conclusion}
\label{sec:conclusion}

In this work, we investigated the purity dynamics of a quantum
system coupled to a continuous-variable pointer through the
interaction Hamiltonian
$\hat H_{\mathrm{int}}=g\,\hat H\otimes\hat p$.
For an initially Gaussian pointer, the conditional translations
associated with different system energies lead to an explicit
gap-dependent suppression of the reduced-system coherences.

Using this structure, we obtained an exact expression for the
time-dependent purity and derived two-sided bounds determined by
the minimum and maximum energy separations on the populated
spectral support. These bounds provide a direct connection between
the spectral geometry of the initial state and the loss of purity.
They are shown to be sharp for two-level systems, for which the
minimum and maximum populated gaps coincide.

The short-time analysis further showed that the initial curvature
of the purity is governed exactly by the Hamiltonian variance,
$\mathcal{P}''(0)
=-(g^2/\sigma^2)\operatorname{Var}_{\psi_S}(\hat H)$,
while the upper spectral-gap bound yields an explicit sufficient
timescale for approaching the asymptotic purity within a prescribed
tolerance. In particular, this timescale scales as
$t_{\varepsilon}\propto\sigma/(|g|\Delta_{\min})$, revealing the
competing roles of pointer width, interaction strength, and the
minimum populated spectral gap.

Finally, the equally spaced $N$-level model provided an explicit
many-level realization of the general results. For an equally
weighted initial state, the asymptotic purity becomes $1/N$.
The analytical and numerical examples illustrate the dependence on
the number of populated levels and the spectral spacing, whereas the
pointer-width dependence follows directly from the equivalent
$\Delta/\sigma$ scaling.

These results provide a compact spectral characterization of purity
loss in Hamiltonian-conditioned pointer interactions. Extensions to
non-Gaussian and mixed pointer states, degenerate spectra,
time-dependent interactions, and noisy open-system settings may
provide useful directions for further investigation.

\section*{Declaration of AI Usage}

During the preparation of this manuscript, the authors utilized an
AI-based language model solely for the purposes of language refinement,
grammar correction, and assistance with LaTeX formatting. The AI was
not used to generate, modify, or interpret any scientific content,
derivations, results, or conclusions presented in this work. All
theoretical developments, analytical derivations, numerical
calculations, and scientific interpretations were carried out
independently by the authors. The authors take full responsibility
for the accuracy and integrity of the content presented in this
manuscript.

\end{document}